\documentclass[10pt,twocolumn,twoside]{IEEEtran}
\usepackage{cite}
\usepackage{amsthm}
\usepackage{amssymb}
\usepackage{amsmath,amsfonts}
\usepackage{algorithmic}
\usepackage{array}
\usepackage[caption=false,font=normalsize,labelfont=sf,textfont=sf]{subfig}
\usepackage{textcomp}
\usepackage{stfloats}
\usepackage{url}
\usepackage{verbatim}
\usepackage{graphicx}
\def\BibTeX{{\rm B\kern-.05em{\sc i\kern-.025em b}\kern-.08em
    T\kern-.1667em\lower.7ex\hbox{E}\kern-.125emX}}
\usepackage{balance}

\newtheorem{lemma}{Lemma}
\newtheorem{theorem}{Theorem}
\newtheorem{definition}{Definition}
\newtheorem{remark}{Remark}
\newtheorem{assumption}{Assumption}

\newtheorem{example}{Example}

\begin{document}
\title{Autonomous Synchronization of Discrete-Time Heterogeneous Multiagent Systems}
\author{Wei Hu,~\IEEEmembership{Quanyi Liang}
\thanks{This work has been submitted to the IEEE for possible publication. Copyright may be transferred without notice, after which this version may no longer be accessible. Wei Hu (first author) and Quanyi Liang (corresponding author) are with the School of Mathematical Sciences, Beihang University, Beijing 100191, China  
(e-mail: 20377116@buaa.edu.cn; qyliang@buaa.edu.cn).}}

\markboth{}%
{}

\maketitle

\begin{abstract}
    This paper investigates the autonomous synchronization problem for discrete-time heterogeneous multiagent systems. 
    The synchronization problem is transformed into the asymptotic decoupling problem of stable modes in a class of discrete-time linear time-varying systems, 
    for which we provide a sufficient condition. 
    Leveraging this condition, synchronization conditions are established. 
    The synchronization conditions are based on the average of the agents' initial dynamic matrices, 
    without requiring the differences among these matrices to be small. 
    This approach reduces the conservativeness of existing conditions and achieves a unification of both homogeneous and heterogeneous systems. 
    Numerical simulation results are provided to support the theoretical findings.

\end{abstract}

\begin{IEEEkeywords}
discrete time, heterogeneous multiagent systems, autonomous synchronization, Linear time-varying systems, asymptotic decoupling.
\end{IEEEkeywords}

\IEEEpeerreviewmaketitle

\section{Introduction}
\label{sec:introduction}

\subsection*{1. Problem Background and Literature Review}
Over the past several decades, the synchronization of multiagent systems (MASs) has become an important and mature area of research, 
driven by its broad applications across various domains, 
including biological rhythms in animals, robotic formation control, power grid systems, 
and artificial robot coordination \cite{Wah2007, Kocarev2013, Chen2014, Motter2013, Rahimi2014}.
These studies primarily aim to create cooperative control algorithms 
that enable a group of agents, subsystems, or signal generators to synchronize 
by forcing their controlled dynamics and trajectories to converge.

In synchronization of MASs, early studies focused on homogeneous systems \cite{olfati2004consensus, ren2005second, scardovi2008synchronization}. 
In particular, \cite{2} proposed a distributed control strategy of homogeneous systems, achieving exponential synchronization under certain conditions.
Having established fundamental synchronization results for homogeneous systems, 
a growing research effort has shifted toward the more challenging and realistic scenario of heterogeneous systems\cite{liu2024, sun2023}.
However, the synchronization of heterogeneous systems often rely on virtual leaders \cite{WIELAND20111068, kim2010output}.

In order to remove the reliance on virtual leaders for synchronization of heterogeneous systems, 
many studies have focused on the autonomous synchronization problem \cite{5},
which emphasizes that the synchronized dynamics and states are not pre-specified 
but autonomously emerge from the agents' inherent properties, offering greater practical relevance.

To achieve the autonomous synchronization of heterogeneous MASs, it is assumed in\cite{5} that 
the agent dynamics $S_i(t)$ converge exponentially to the average $S_{\infty} \triangleq \frac{1}{N}\sum_{i=1}^N S_i(0)$. 
Furthermore, it is shown that synchronization can be achieved when this convergence rate exceeds the growth rate of the unstable modes of the agents,
which serves as the sufficient condition for synchronization. 
This growth rate is characterized by the maximum real part of the eigenvalues of $S_{\infty}$ for continuous-time systems, 
or by the operator norm of $S_{\infty}$ for discrete-time systems.
It should be noted that, in the discrete-time case, 
the proper characterization for the growth rate of agents' unstable modes is the spectral radius $\rho(S_{\infty})$, 
rather than the operator norm $\|S_{\infty}\|$\cite{agarwal2000difference}.
Consequently, a relaxed synchronization condition is required for the discrete-time case, which should be expressed in terms of the spectral radius.

Furthermore, due to the stringent constraints imposed on the agents' initial dynamics $S_i(0)$,
the results of \cite{5} are effectively applicable only when 
the heterogeneity of MASs is sufficiently small, 
i.e., the differences among $S_i(0)$ are sufficiently small.
Motivated by this limitation, the authors of \cite{5} proposed two conjectures in a separate study\cite{yan2022}: 
for general heterogeneous systems, 
fully distributed autonomous synchronization of heterogeneous MASs can be achieved---one for the continuous-time case and one for the discrete-time case.
Recent study\cite{yan2025} has solved the conjecture of continuous-time case.
This inspires us to solve the problem of autonomous synchronization for general discrete-time heterogeneous MASs.

In addition, under a suitable coordinate transformation, heterogeneous multiagent systems $\xi_i(t+1) = S_i(t) \xi_i(t) + B u_i(t)$ 
($\xi_i $ is the state of agent $i$)
can be converted into a linear time-varying (LTV) system as follows:
\begin{equation}\label{eq:main-system}
X_{t+1} = (A + P_t) X_t.
\end{equation}
And achieving asymptotic decoupling of system \eqref{eq:main-system} is a key problem in synchronization of heterogeneous MASs.
For the continuous-time case, under the assumption that $\|P_t\|$ converges to $0$  exponentially,
\cite{1} has provided
the sufficient condition and convergence rate estimates for asymptotic decoupling of systems \eqref{eq:main-system}.
For the discrete-time case, a sufficient condition for asymptotic decoupling remains to be established,
which could technically facilitate the investigation of heterogeneous MASs.

\subsection*{2. Contributions of This Paper}
This paper aims to address these issues.
First, under a weaker assumption that $\|P_t\|$ converges to $0$(not necessarily exponentially\cite{1}), 
we provide the same sufficient condition for asymptotic decoupling of stable modes in discrete-time system (Theorem \ref{thm1}).
The proof leverages coordinate transformations to eliminate the influence of Jordan blocks in $A^s$ while preserving asymptotic equivalence.

Second, applying the above theoretical result to the discrete-time autonomous synchronization of heterogeneous MASs, 
we derive synchronization conditions (Theorem \ref{thm2}). 
The key to proving Theorem \ref{thm2} lies in choosing an appropriate coupling strength for the synchronization of agents' dynamics.
With this choice of coupling strength, 
the dynamics matrices synchronize to the average dynamics $S_{\infty}$ at a rate faster than $1/\rho(S_{\infty})$.
Consequently, the corresponding system \eqref{eq:main-system} achieves asymptotic decoupling, 
and thus the multiagent systems achieves synchronization.
And the synchronization conditions depend on the average of the agents' dynamics matrices $S_\infty$,
making our results applicable to multiagent systems with large heterogeneity.
Therefore, while \cite{yan2025} solved the continuous-time conjecture, 
our work makes a significant progress on the discrete-time counterpart proposed in \cite{yan2022}.
In addition, when all agents share identical dynamics, i.e., $S_i(0) = S$ for all $i$, 
these conditions reduce exactly to those established in \cite{2} for homogeneous systems. 
Therefore, our synchronization criterion serves as a direct extension from homogeneous MASs in \cite{2} to heterogeneous MASs.
Furthermore, combining Theorem \ref{thm1} with Theorem \ref{thm2}, we provide an estimate of the synchronization rate (Theorem \ref{thm3}).
Finally, numerical simulations validate the results, 
showing that the theoretical convergence rate closely matches the actual rate.
\subsection*{3. Paper Organization}

The remainder of this paper is organized as follows:
Section 2 introduces basic graph theory concepts and formalizes the problems under study;
Section 3 presents the sufficient condition for asymptotic decoupling of system \eqref{eq:main-system} and provides its proof;
Section 4 applies the theoretical result to synchronization problems, proposing a control strategy and proving its effectiveness;
Section 5 provides numerical simulations to validate the theoretical findings;
Section 6 concludes the paper.
\section{Preliminaries and Problem Formulation}
\subsection{Notation}
We adopt the following conventions:
\begin{itemize}
    \item $\rho(A)$ denotes the spectral radius of a square matrix $A$, i.e., the maximum modulus of its eigenvalues.
    \item $\|X\|$ denotes the 2-norm of $X \in \mathbb{C}^n$
    \[
        \|X\| = \sqrt{X^H X}.
    \]
    \item The norm of a square matrix $A \in \mathbb{C}^{n \times n}$ is the induced operator norm
    \[
        \|A\| = \max_{\|X\|=1} \|AX\|.
    \]
\end{itemize}

\subsection{Problem 1: Multi-Agent Systems Synchronization Control}
\label{subsec:graph-theory}
We study a synchronization problem for MASs.
The communication topology is described by a connected, simple, undirected weighted graph $\mathcal{G} = (\mathcal{V}, \mathcal{E})$, where
\begin{itemize}
    \item $\mathcal{V} = \{1, 2, \ldots, N\}$ is the set of nodes, and each node represents an agent in the MASs;
    \item $\mathcal{E} \subseteq \mathcal{V} \times \mathcal{V}$ is the set of edges, representing connections between agents;
    \item if $\{i, j\} \in \mathcal{E}$, then agents $i$ and $j$ can communicate or interact.
\end{itemize}
The adjacency matrix $A = [a_{i,j}] \in \mathbb{R}^{N \times N}$ is defined as
\[
a_{i,j} = 
\begin{cases}
> 0, & \text{if } \{i, j\} \in \mathcal{E}, \\
0, & \text{otherwise},
\end{cases}
\quad \text{with } a_{i,i} = 0.
\]
The Laplacian matrix $L = [l_{i,j}] \in \mathbb{R}^{N \times N}$ is defined as
\[
l_{i,j} =
\begin{cases}
-a_{i,j}, & i \neq j, \\
\sum_{k \neq i} a_{i,k}, & i = j.
\end{cases}
\]
As $\mathcal{G}$ is connected, simple, and undirected weighted, the Laplacian matrix satisfies
\begin{itemize}
\item $L$ has eigenvalue $0$ with eigenvector $\mathbf{1}$ (the all-ones vector),
\item $L$ is symmetric positive semidefinite, and there exists an orthogonal matrix $Q$ such that
\begin{equation}\label{Q}
    L Q = Q \operatorname{diag}(\lambda_1, \lambda_2, \dots, \lambda_N),
\end{equation}
\end{itemize}
where $\lambda_i$ are the eigenvalues of $L$, ordered as
\[
0 = \lambda_1 < \lambda_2 \leq \lambda_3 \leq \cdots \leq \lambda_N.
\]

We consider the MASs based on the above adjacency matrix of the graph $\mathcal{G}$:
\begin{subequations}\label{eq:whole_system}
\begin{align}
\xi_i(t+1) &= S_i(t) \xi_i(t) + B u_i(t), \label{eq:system_a}\\
u_i(t) &= K_i(t) \sum_{j=1}^{N} a_{i,j} (\xi_j(t) - \xi_i(t)), \label{eq:system_b}\\
S_i(t+1) &= S_i(t) + c \sum_{j=1}^{N} a_{i,j} (S_j(t) - S_i(t)), \label{eq:system_c}
\end{align}
\end{subequations}
where
\begin{itemize}
    \item $\xi_i(t) \in \mathbb{R}^p$ is the state of the $i$-th agent,
    \item $B \in \mathbb{R}^p$ is the input matrix,
    \item $S_i(t) \in \mathbb{R}^{p \times p}$ is the system matrix of the $i$-th agent,
    \item $K_i(t) \in \mathbb{R}^p$ is the control matrix depending on $S_i(t)$,
    \item The parameter $c \in \mathbb{R}$ is the coupling strength.
\end{itemize}
\begin{definition}
    The multi-agent systems \eqref{eq:whole_system} is said to achieve synchronization if for any initial values $\xi_1(0), \xi_2(0), \ldots, \xi_N(0)$, 
    the corresponding solutions satisfy
    \begin{equation*}
        \lim_{t \to \infty} \|\xi_i(t) - \xi_j(t)\| = 0, \quad \forall 1\leq i ,j \leq N.
    \end{equation*}
\end{definition}
Our goal is to find suitable control matrices $K_i(t)$ and coupling strength $c$ such that the system \eqref{eq:whole_system} achieves synchronization.

To simplify \eqref{eq:whole_system}, introduce the following notation:
\begin{align*}
S(t) &= \operatorname{diag}(S_1(t), S_2(t), \ldots, S_N(t)),\\
\hat{S}(t) &= (S_1(t)^T, S_2(t)^T, \ldots, S_N(t)^T)^T,\\
H(t) &= \operatorname{diag}(B K_1(t), B K_2(t), \ldots, B K_N(t)),\\
\xi(t) &= (\xi_1(t)^T, \xi_2(t)^T, \ldots, \xi_N(t)^T)^T.
\end{align*}
Substituting \eqref{eq:system_b} into \eqref{eq:system_a}, system \eqref{eq:whole_system} can be rewritten as

\begin{subequations}\label{eq:newwhole_system}
\begin{align}
    \xi(t+1) &= [S(t) - H(t) (L \otimes I_p)] \xi(t), \label{eq:newsystem_ab} \\
    \hat{S}(t+1) &= ((I_N - c L) \otimes I_p) \hat{S}(t). \label{eq:newsystem_c}
\end{align}
\end{subequations}

\subsection{Problem 2: Asymptotic Decoupling of LTV Systems}
\label{subsec:problem-formulation}
Under a suitable coordinate transformation, 
system \eqref{eq:newsystem_ab} can be converted into the form of system \eqref{eq:main-system},
where $X_t \in \mathbb{C}^n$ is the system state
and $A$ is a block-diagonal matrix
\begin{equation}
\label{eq:A-block}
A = \begin{pmatrix}
A^* & 0 \\
0 & A^s
\end{pmatrix},
\end{equation}
$A^s$ characterizes the synchronization among agents, 
while $A^*$, which is exactly $S_{\infty}$, 
characterizes the collective motion of the agents.
For system \eqref{eq:main-system}, we assume that
\begin{itemize}
    \item $\rho(A^{*}) \geq 1$ (unstable modes),
    \item $\rho(A^{s}) < 1$ (stable modes),
    \item $P_t$ is a time-varying perturbation satisfying $\|P_t\| \to 0$.
\end{itemize}
Correspondingly, the state $X_t$ can be decomposed as
\[
X_t = \begin{pmatrix} Y_t \\ Z_t \end{pmatrix}.
\]
Write \( P_t \) in proper block form:
\[
P_t = \begin{pmatrix}
P_{1,t} & P_{2,t} \\
P_{3,t} & P_{4,t}
\end{pmatrix},
\]
so that system \eqref{eq:main-system} can be rewritten as
\begin{equation}
\label{eq:decomposed-system}
\begin{pmatrix} Y_{t+1} \\ Z_{t+1} \end{pmatrix} = 
\left[ \begin{pmatrix} A^* & O \\ O & A^s \end{pmatrix} + 
\begin{pmatrix} P_{1,t} & P_{2,t} \\ P_{3,t} & P_{4,t} \end{pmatrix} \right]
\begin{pmatrix} Y_t \\ Z_t \end{pmatrix}.
\end{equation}

\begin{definition}
    System \eqref{eq:decomposed-system} is said to achieve asymptotic decoupling, if $\|Z_t\| \rightarrow 0$ for any initial values.
\end{definition}

It is well known that when system \eqref{eq:decomposed-system} achieves asymptotic decoupling, 
MASs \eqref{eq:newsystem_ab} achieves synchronization.
And \cite{1} shows that in continuous-time case, 
system \eqref{eq:decomposed-system} achieves asymptotic decoupling if $P_{3,t}$ decays exponentially at a rate faster than $1/\rho(A^*)$. 
Therefore, we aim to establish the same result for asymptotic decoupling problem in the discrete-time case, 
from which we can derive the synchronization conditions for system \eqref{eq:newsystem_ab}.
\section{Sufficient Condition for Asymptotic Decoupling of LTV Systems}
\label{sec:sufficient-conditions}

In this section, we present a sufficient condition for asymptotic decoupling of system \eqref{eq:decomposed-system}: 
if $P_{3,t}$ decays exponentially at a rate faster than $1/\rho(A^{*})$, then $\|Z_t\|$ converges exponentially to zero,
which is summarized by the following theorem.

\begin{theorem}
\label{thm1}
    For system \eqref{eq:decomposed-system}, suppose that there exist $0 \leq \kappa <1/\rho(A^{*})$ and $M > 0$ 
    such that $\|P_{3,t}\| \leq M \kappa^t$.
    Then for any $r$ satisfying $\max \left( \rho(A^{s}), \kappa \rho(A^{*}) \right)<r <1 $,
    there exists $M_{r}$ such that
    \begin{align*}
    \|Z_t\| \leq M_{r}   \|X_0\| {r} ^t, \, \text{for any initial state} \, X_0.
    \end{align*}
\end{theorem}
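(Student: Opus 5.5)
The plan is to treat the $Z$-equation as a perturbed stable system driven by the input $P_{3,t}Y_t$, and to combine two a priori estimates through a variation-of-constants formula. One estimate is a crude exponential growth bound for the whole state $X_t$, hence for $Y_t$. The other is a uniform exponential decay bound for the transition matrix of $A^s+P_{4,t}$. From \eqref{eq:decomposed-system},
\[
Z_{t+1}=(A^s+P_{4,t})Z_t+P_{3,t}Y_t .
\]
Write $\Phi(t,s)=(A^s+P_{4,t-1})\cdots(A^s+P_{4,s})$ for $t\ge s$, with $\Phi(s,s)=I$. Then
\[
Z_t=\Phi(t,0)Z_0+\sum_{s=0}^{t-1}\Phi(t,s+1)P_{3,s}Y_s .
\]
Given $r$, fix $\varepsilon>0$ so small that $q:=\rho(A^s)+\varepsilon<r$ and $p:=\kappa(\rho(A^*)+\varepsilon)<r$. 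This is possible because $\max(\rho(A^s),\kappa\rho(A^*))<r$.

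\emph{Step 1 (adapted norms).} For any square matrix $A_0$ and any $\delta>0$ there is an invertible $T_\delta$ such that $\|T_\delta^{-1}A_0T_\delta\|\le\rho(A_0)+\delta$. This is the standard scaled-Jordan-form construction, which is where the Jordan blocks of $A^s$ are neutralised. The induced norm $\|x\|_\delta=\|T_\delta^{-1}x\|$ is equivalent to $\|\cdot\|$. I apply this to $A$ and to $A^s$ with $\delta=\varepsilon/2$; note $\rho(A)=\rho(A^*)$, since $\rho(A^*)\ge1>\rho(A^s)$. Because $\|P_t\|\to0$, there is $T_0$ such that, for $t\ge T_0$, $\|P_t\|_\delta\le\varepsilon/2$ and $\|P_{4,t}\|_\delta\le\varepsilon/2$ in the respective adapted norms. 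Consequently $\|A+P_t\|_\delta\le\rho(A^*)+\varepsilon$ and $\|A^s+P_{4,t}\|_\delta\le q$ for $t\ge T_0$.

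\emph{Step 2 (the two a priori bounds).} For $t\ge T_0$, $X_t$ is bounded by $(\rho(A^*)+\varepsilon)^{t-T_0}$ times $\|X_{T_0}\|$. The finitely many factors $A+P_t$ with $t<T_0$ are bounded, since $P_t\to0$ implies $\sup_t\|P_t\|<\infty$. Absorbing their product and the norm-equivalence constants into one constant, together with the factor $(\rho(A^*)+\varepsilon)^{-T_0}$, gives
\[
\|Y_t\|\le\|X_t\|\le C_1(\rho(A^*)+\varepsilon)^t\|X_0\|
\]
for all $t$. I expect the main technical point to be obtaining $\|\Phi(t,s)\|\le C_2q^{t-s}$ \emph{uniformly} in all $t\ge s\ge0$, including $s<T_0$ where $P_{4,t}$ need not be small. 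I would handle it by splitting into cases:

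\begin{itemize}
\item If $s\ge T_0$, the bound follows directly from Step 1.
\item If $s<T_0\le t$, write $\Phi(t,s)=\Phi(t,T_0)\Phi(T_0,s)$. The second factor lies in a finite set of bounded products, so its norm is at most $K\le Kq^{-T_0}q^{T_0-s}$.
\item If $t<T_0$, the product is again one of finitely many bounded matrices.
\end{itemize}

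In each case one obtains $\|\Phi(t,s)\|\le C_2 q^{t-s}$ with a single constant $C_2$.

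\emph{Step 3 (summation).} Substituting these bounds and $\|P_{3,s}\|\le M\kappa^s$ into the variation-of-constants formula gives
\[
\|Z_t\|\le C_2q^t\|X_0\|+C_2C_1Mq^{-1}\|X_0\|\sum_{s=0}^{t-1}q^{t-s}p^s .
\]
Each term of the sum is at most $m^t$ with $m=\max(p,q)<r$, so the sum is at most $t\,m^t\le\bigl(\sup_{t\ge0}t(m/r)^t\bigr)r^t$, and the supremum is finite. Collecting constants yields $\|Z_t\|\le M_r\|X_0\|r^t$. The case $\kappa=0$ is harmless: then $P_{3,t}=0$ for $t\ge1$, and the same argument works.
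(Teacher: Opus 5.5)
Your proof is correct and follows essentially the paper's route. Both use the scaled-Jordan similarity for $A^s$ (Lemma~\ref{lem1}), the growth bound $\|Y_t\|\le C(\rho(A^*)+\varepsilon)^t\|X_0\|$ (Lemma~\ref{lem2}, where you rightly make $\rho(A)=\rho(A^*)$ explicit), and a discrete comparison argument on the $Z$-equation. Where the paper invokes the scalar recursive inequality of Lemma~\ref{lem3}, you unroll the recursion by variation of constants with a uniform bound on $\Phi(t,s)$. This gives a constant visibly linear in $\|X_0\|$ without the paper's closing linearity argument, and it handles the case $p=q$ through the $t\,m^t$ bound instead of perturbing $\epsilon$.
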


To prove Theorem \ref{thm1}, we introduce three lemmas.
\begin{lemma}\label{lem1}\cite{horn2012matrix}
For any square matrix $A \in \mathbb{C}^{n\times n}$ and any $\epsilon > 0$, there exists an invertible matrix $Q$ such that
\begin{align}
    \|Q^{-1} A Q\| < \rho(A) + \epsilon.
\end{align}
\end{lemma}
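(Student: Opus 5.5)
The plan is to combine a unitary triangularization with a diagonal rescaling that shrinks the off-diagonal part. By the Schur decomposition, there is a unitary $U$ and an upper triangular $T=[t_{i,j}]$ with $A=UTU^{H}$, whose diagonal entries $t_{i,i}$ are the eigenvalues of $A$. For a parameter $0<\delta\le 1$, set
\[
D_\delta=\operatorname{diag}(1,\delta,\delta^2,\ldots,\delta^{n-1}),
\]
and take $Q=UD_\delta$, which is invertible.

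Then $Q^{-1}AQ=D_\delta^{-1}U^{H}AUD_\delta=D_\delta^{-1}TD_\delta$. A direct computation gives the entries of this matrix:
\[
(D_\delta^{-1}TD_\delta)_{i,j}=t_{i,j}\,\delta^{\,j-i}.
\]
So the diagonal is unchanged and every strictly upper entry is multiplied by at least one factor of $\delta$. We write $D_\delta^{-1}TD_\delta=\Lambda+E_\delta$, where $\Lambda=\operatorname{diag}(t_{1,1},\ldots,t_{n,n})$ and $E_\delta$ is strictly upper triangular.

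Next we bound the two pieces in the induced $2$-norm. Since $\Lambda$ is diagonal, $\|\Lambda\|=\max_i|t_{i,i}|=\rho(A)$. For $E_\delta$, we bound the operator norm by the Frobenius norm. Because $\delta\le1$ and $j-i\ge1$, we have $\delta^{j-i}\le\delta$, so
\[
\|E_\delta\|\le\Big(\sum_{i<j}|t_{i,j}|^2\delta^{2(j-i)}\Big)^{1/2}\le \delta\Big(\sum_{i<j}|t_{i,j}|^2\Big)^{1/2}=:\delta C,
\]
where $C$ depends only on $T$. By the triangle inequality, $\|Q^{-1}AQ\|\le\rho(A)+\delta C$.

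Finally we choose $\delta$ to make the inequality strict. If $C=0$, then $T$ is diagonal and already $\|Q^{-1}AQ\|=\rho(A)<\rho(A)+\epsilon$. Otherwise we take $\delta=\min(1,\epsilon/(2C))$, which gives $\|Q^{-1}AQ\|\le\rho(A)+\epsilon/2<\rho(A)+\epsilon$.

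There is no real obstacle here. The only point needing care is using the spectral (induced $2$-)norm consistently: the unitary factor $U$ is what lets $\|\Lambda\|$ equal $\rho(A)$ exactly. This is why we start from the Schur form rather than the Jordan form. A Jordan-based argument, scaling each Jordan block so that its superdiagonal ones become $\delta$, works equally well.
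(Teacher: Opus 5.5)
Your proof is correct and follows the standard argument behind the paper's citation: the paper gives no proof of its own and defers to \cite{horn2012matrix}, whose argument is the same Schur triangularization plus the diagonal similarity $\operatorname{diag}(t,t^2,\ldots,t^n)$ for large $t$ (your $\delta=1/t$). There the off-diagonal part is estimated in the maximum column-sum norm, so your Frobenius bound on $E_\delta$ is exactly the adjustment needed for the spectral norm that the paper's notation fixes.

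One small correction to your closing remark: $\|\Lambda\|=\rho(A)$ holds simply because $\Lambda$ is diagonal with the eigenvalues of $A$ on its diagonal, not because $U$ is unitary (the lemma allows any invertible $Q$). That is also why the Jordan-form variant works just as well.
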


\begin{lemma}\label{lem2}
Suppose that $P_t \to \mathbf{0}$. Then for any square matrix $A\in \mathbb{C}^{n\times n}$ and $\epsilon > 0$, there exists $M > 0$ such that
\[
\frac{\left\| \prod_{i=0}^{t-1} (A + P_i) \right\|}{(\rho(A) + \epsilon)^t} < M.
\]
\end{lemma}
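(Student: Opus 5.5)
The plan is to reduce the claim to a one-step norm estimate via Lemma \ref{lem1}, and then control the perturbations with the fact that they eventually become small.

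\textbf{Change of norm.} Fix $\epsilon>0$. By Lemma \ref{lem1} there is an invertible $Q$ with $\|Q^{-1}AQ\|<\rho(A)+\epsilon/2$. Define $\tilde P_i = Q^{-1}P_iQ$. Since $\|\tilde P_i\|\le \|Q^{-1}\|\,\|Q\|\,\|P_i\|\to 0$, there exists $T$ such that $\|\tilde P_i\|<\epsilon/2$ for all $i\ge T$. Writing
\[
\prod_{i=0}^{t-1}(A+P_i)=Q\prod_{i=0}^{t-1}(Q^{-1}AQ+\tilde P_i)\,Q^{-1}
\]
and using submultiplicativity of the operator norm, we get the following. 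For $t\ge T$, the factors with $i\ge T$ each have norm at most $\rho(A)+\epsilon$. The first $T$ factors have norm at most some constant $C_0=\prod_{i<T}(\|Q^{-1}AQ\|+\|\tilde P_i\|)$. This gives
\[
\Big\|\prod_{i=0}^{t-1}(A+P_i)\Big\|\le \|Q\|\,\|Q^{-1}\|\,C_0\,(\rho(A)+\epsilon)^{t-T}.
\]
(The product is ordered with later indices on the left, but the bound does not depend on the order.)

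\textbf{Choice of $M$.} Divide by $(\rho(A)+\epsilon)^t$. For $t\ge T$ the quotient is bounded by $\|Q\|\,\|Q^{-1}\|\,C_0(\rho(A)+\epsilon)^{-T}$, which is independent of $t$. For the finitely many $t<T$ the quotient is some finite number. Let $M$ be one more than the maximum of all these values.

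\textbf{Main obstacle.} The only delicate point is the case $\rho(A)=0$, or more generally the case where the bound for $\|Q^{-1}AQ\|$ is not by itself enough. This is why $\epsilon$ is split: half is absorbed by the similarity transformation and half by the perturbation. Since $\rho(A)+\epsilon>0$, the division is always legitimate.
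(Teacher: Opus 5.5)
Your proof is correct and follows essentially the same route as the paper. Both conjugate by the $Q$ from Lemma~\ref{lem1} with $\|Q^{-1}AQ\|<\rho(A)+\epsilon/2$, bound the tail factors using $\|P_i\|\to 0$ and submultiplicativity, and treat the finitely many initial indices separately. The only cosmetic difference is that you allot $\epsilon/2$ to the perturbation and get a constant bound on the ratio directly, whereas the paper allots $\epsilon/4$ and gets a geometrically decaying ratio $\bigl((\rho(A)+\tfrac{3}{4}\epsilon)/(\rho(A)+\epsilon)\bigr)^t$, which it then bounds.
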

\begin{proof}
By Lemma \ref{lem1}, there exists an invertible matrix $Q$ such that $\|Q^{-1} A Q\| < \rho(A) + \frac{\epsilon}{2}$.
We have
\begin{equation}\label{eq:lem2:1}
\begin{split}
\prod_{i=0}^{t-1} (A + P_i) &= \prod_{i=0}^{t-1} Q(Q^{-1} A Q + Q^{-1} P_i Q) Q^{-1}\\
&= Q \left[ \prod_{i=0}^{t-1} (Q^{-1} A Q + Q^{-1} P_i Q) \right] Q^{-1}.
\end{split}
\end{equation}
Taking the norm on both sides of \eqref{eq:lem2:1}, we obtain
\begin{equation}\label{eq:lem2:2}
\begin{split}
&\left\| \prod_{i=0}^{t-1} (A + P_i) \right\| \\
\leq &M'  \left\| \prod_{i=0}^{t-1} (Q^{-1} A Q + Q^{-1} P_i Q) \right\|  \\
\leq &M'\prod_{i=0}^{t-1}  \left[\|Q^{-1} A Q\|+ \|Q^{-1}\|  \|P_i\|  \|Q\| \right],
\end{split}
\end{equation}
where $M'=\|Q\|  \|Q^{-1}\|$.\\
Since $\|P_i\| \to 0$, there exists $N_1$ such that for any $i > N_1$,
$\|Q^{-1}\|  \|P_i\|  \|Q\| < \frac{\epsilon}{4}$.
Thus for $i > N_1$,we have
\begin{equation}\label{eq:lem2:3}
    \|Q^{-1} A Q\| + \|Q^{-1}\| \cdot \|P_i\| \cdot \|Q\| < \rho(A) + \frac{3}{4} \epsilon.
\end{equation}
It follows from \eqref{eq:lem2:2} and \eqref{eq:lem2:3} that for any $i> N_1$, we have
\begin{equation}\label{eq:lem2:4}
\left\| \prod_{i=0}^{t-1} (A + P_i) \right\| 
\leq M_1 \left( \rho(A) + \frac{3}{4} \epsilon \right)^{t-N_1-1}, 
\end{equation}
where
\[
    M_1=\|Q\|  \|Q^{-1}\|  \prod_{i=0}^{N_1} ( \|Q^{-1} A Q\|+ \|Q^{-1}\|\|P_i\| \|Q\| ).
\]
Dividing both sides of \eqref{eq:lem2:4} by $(\rho(A) + \epsilon)^t$, we obtain that
\begin{equation}\label{eq:lem2:5}
\frac{\left\| \prod_{i=0}^{t-1} (A + P_i) \right\|}{(\rho(A) + \epsilon)^t}
\leq 
M_2 \left( \frac{\rho(A) + \frac{3}{4} \epsilon}{\rho(A) + \epsilon} \right)^t, \quad \forall \, t > N_1+1,
\end{equation}
where
\[
    M_2=\frac{\|Q\|  \|Q^{-1}\|  \prod_{i=0}^{N_1} \left( \|Q^{-1} A Q\| + \|Q^{-1}\|  \|P_i\|  \|Q\| \right)}{(\rho(A) + \frac{3}{4} \epsilon)^{N_1+1}}.
\]
The right-hand side of \eqref{eq:lem2:5} tends to $0$, hence is bounded.
Thus the left-hand side is bounded. Since finitely many terms do not affect boundedness, there exists $M > 0$ such that
\[
\frac{\left\| \prod_{i=0}^{t-1} (A + P_i) \right\|}{(\rho(A) + \epsilon)^t} < M,\quad \forall \, t.
\]
\end{proof}
\begin{lemma}\label{lem3}
Let $\{a_t\}_{t=0}^{\infty}$ be a nonnegative sequence satisfying $a_{t+1} \leq b_t + \lambda_t a_t$, with $b_t, \lambda_t \geq 0$. 
Suppose that
\begin{itemize}
\item $\lambda_t \to \lambda < 1$;
\item There exist $M > 0$ and $\beta < 1$ such that $b_t \leq M \beta^t$.
\end{itemize}
Then for any $r$ with $\max(\beta, \lambda) < r < 1$, there exists $M'>0$ such that $a_t \leq M' (r)^t$.
\end{lemma}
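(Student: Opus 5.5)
We plan to prove Lemma \ref{lem3} by unrolling the recursion and splitting the resulting sum at a fixed time after which $\lambda_t$ is close to $\lambda$. Fix $r$ with $\max(\beta,\lambda)<r<1$, and choose an intermediate value $\mu$ with $\max(\beta,\lambda)<\mu<r$. Since $\lambda_t\to\lambda<\mu$, there is $T$ such that $\lambda_t\le \mu$ for all $t\ge T$. The finitely many terms $a_0,\dots,a_T$ are just some nonnegative numbers, so it suffices to bound $a_t$ for $t>T$.

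For $t>T$, we iterate the inequality $a_{t+1}\le b_t+\lambda_t a_t$ starting from time $T$. All quantities are nonnegative, so the inequalities can be chained. This gives
\[
a_t \le \Big(\prod_{k=T}^{t-1}\lambda_k\Big) a_T + \sum_{j=T}^{t-1} \Big(\prod_{k=j+1}^{t-1}\lambda_k\Big) b_j
\le \mu^{t-T} a_T + \sum_{j=T}^{t-1} \mu^{t-1-j} M\beta^j .
\]
The first term is $(a_T\mu^{-T})\mu^t\le (a_T\mu^{-T}) r^t$. For the sum we use $\beta\le\mu$:
\[
\sum_{j=T}^{t-1}\mu^{t-1-j}\beta^j\le \sum_{j=0}^{t-1}\mu^{t-1} = t\mu^{t-1}.
\]
Hence the sum is at most $M t\mu^{t-1}$. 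This linear factor $t$ is the only delicate point, and it is precisely why the strict inequality $\mu<r$ is needed. Since $(\mu/r)^t\,t\to 0$, the quantity $t\mu^{t-1}/r^t$ is bounded by some constant $C$ for all $t$. If one prefers to avoid the factor $t$, one can instead pick $\mu$ with $\beta<\mu$ strictly and sum a geometric series. That gives the bound $M\mu^{t-1}\sum_j(\beta/\mu)^j\le M\mu^{t-1}/(1-\beta/\mu)$. The case $\beta=0$ is covered either way.

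Combining the two pieces, we get $a_t\le (a_T\mu^{-T}+MC)r^t$ for all $t>T$. For $0\le t\le T$, we enlarge the constant to $\max_{t\le T} a_t r^{-t}$. Setting $M'$ to be the maximum of these constants, plus $1$ so that $M'>0$, yields $a_t\le M' r^t$ for all $t$. We do not expect any step to be a genuine obstacle. The one point that needs care is that $\lambda_t$ is only eventually below $\mu$, rather than for all $t$, and this is handled by starting the unrolling at time $T$ instead of time $0$.
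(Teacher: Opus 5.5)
Your proof is correct and takes essentially the paper's route: you fix a time $T$ after which $\lambda_t$ is bounded by a constant below $r$, control $a_t$ through the resulting constant-coefficient linear recursion, and absorb the finitely many early terms into the constant. The paper instead compares with an auxiliary sequence $c_t$ and writes out its closed-form solution, which forces the non-resonance choice $\epsilon\neq\beta-\lambda$; your direct unrolling of the convolution sum avoids that caveat, and since you already chose $\mu>\beta$ strictly, the geometric-series bound applies immediately and the detour through the factor $t$ is unnecessary.
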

\begin{proof}
As $\lambda_t \to \lambda$, for any $\epsilon >0$ with $\epsilon \neq \beta-\lambda$, there exists $T$ such that
\begin{equation*}
    \lambda_t < \lambda + \epsilon,\quad \forall \, t \geq T.
\end{equation*}
Therefore, we have
\begin{equation*} 
    a_{t+1} \leq M \beta^t + (\lambda + \epsilon) a_t,\quad \forall \, t \geq T.
\end{equation*}
Considering a new sequence $\{c_t\}_{t=T}^{\infty}$ satisfying 
\begin{subequations}
\begin{align}
    c_{T}&=a_{T},\\
    c_{t+1} &= M \beta^t + (\lambda + \epsilon) c_t,\quad \forall \, t \geq T, \label{eq:lem3:1}
\end{align}
\end{subequations}
by induction we have
\begin{equation}
    c_t \geq a_t ,\quad \forall \, t \geq T . \label{eq:lem3:2}
\end{equation} 
\eqref{eq:lem3:1} is a first-order linear nonhomogeneous recurrence relation, with solution
\[
c_t = \frac{M}{\beta - (\lambda + \epsilon)} \beta^t + C (\lambda + \epsilon)^t, \quad \forall \, t \geq T,
\]
where $C$ depends on $c_{T}$.\\
Thus there exists $M'$ such that
\begin{equation}\label{eq:lem3:c}
c_t \leq M' \max(\beta, \lambda + \epsilon)^t.
\end{equation}
It follows from \eqref{eq:lem3:2} and \eqref{eq:lem3:c} that
\begin{equation}\label{eq:lem3:3}
    a_t \leq M' \max(\beta, \lambda + \epsilon)^t.
\end{equation}

Since $\epsilon > 0$ with $\epsilon \neq \beta-\lambda$ can be chosen arbitrarily small, 
the quantity $\max(\beta, \lambda + \epsilon)$ can be made arbitrarily close to $\max(\beta, \lambda)$. 
Consequently, for any $r$ satisfying $\max(\beta, \lambda) < r < 1$, 
we may select $\epsilon$ sufficiently small such that $\max(\beta, \lambda + \epsilon) < r$. 
Then, from \eqref{eq:lem3:3}, there exists $M' > 0$ (depending on the chosen $\epsilon$) such that
\[
a_t \leq M' r^t.
\]
This completes the proof.
\end{proof}

With these lemmas, we now prove Theorem \ref{thm1}.

\begin{proof}
Let $\{X_t\}_{t=0}^{\infty}$ be the trajectory of system \eqref{eq:decomposed-system} generated from an arbitrary initial condition $X_0$.
Apply Lemma \ref{lem1} to $A^s$: for any $0<\delta< 1-\rho(A^{s})$, there exists an invertible matrix $R$ such that $\|R^{-1} A^s R\| < \rho(A^{s}) + \delta$.

Consider the following change of variables:
\begin{subequations}
\begin{align}
    Y = Y', \label{eq:thm1:1} \\
    Z = R Z'. \label{eq:thm1:2}
\end{align}
\end{subequations}
From \eqref{eq:thm1:1}\eqref{eq:thm1:2}, system \eqref{eq:decomposed-system} becomes
\begin{subequations}
\begin{align}
Y_{t+1}' &= \bigl( A^* + P_{1,t} \bigr) Y_t' + \bigl( P_{2,t} R \bigr) Z_t', \\
Z_{t+1}' &= \bigl( R^{-1} P_{3,t} \bigr) Y_t' + \bigl( R^{-1} A^s R + R^{-1} P_{4,t} R \bigr) Z_t'. \label{eqZ_k}
\end{align}
\end{subequations}
In particular, taking the norm on both sides of \eqref{eqZ_k}, we obtain
\begin{align}\label{ineqZ_k}
\|Z'_{t+1}\| \leq \|R^{-1} P_{3,t} Y'_t\| + \|R^{-1} A^s R + R^{-1} P_{4,t} R\|  \|Z'_t\|.
\end{align}

In the following we will verify that \eqref{ineqZ_k} satisfies the conditions of Lemma \ref{lem3}.

First, we estimate $\|R^{-1} P_{3,t} Y'_t\|$.
\begin{equation}\label{eq:thm1:3}
\begin{split}
\|Y'_t\| &= \|Y_t\| \leq \|X_t\| \leq \biggl\| \prod_{i=0}^{t-1} (A + P_i) X_0 \biggr\| \\
&\leq \biggl\| \prod_{i=0}^{t-1} (A + P_i) \biggr\| \|X_0\|.
\end{split}
\end{equation}
By Lemma \ref{lem2}, for any $0<\epsilon<\kappa^{-1}-\rho(A^{*})$, there exists $M'$ such that
\begin{equation}\label{eq:thm1:4}
\frac{\left\| \prod_{i=0}^{t-1} (A + P_i) \right\|}{(\rho(A^{*}) + \epsilon)^t} < M'.
\end{equation}
It follows from \eqref{eq:thm1:3} and \eqref{eq:thm1:4} that
\begin{equation}\label{eq:thm1:5}
    \|Y'_t\| \leq M'\|X_0\| (\rho(A^{*}) + \epsilon)^t.
\end{equation}
Combining inequality \eqref{eq:thm1:5} and hypothesis $\|P_{3,t}\| \leq M \kappa^t$, we have
\begin{equation}\label{eq:thm1:8}
\begin{split}
\|R^{-1} P_{3,t} Y'_t\| &\leq \|R^{-1}\| \|P_{3,t}\| \|Y'_t\| \\
&\leq M M' \|R^{-1}\| \|X_0\| ((\rho(A^{*}) + \epsilon)\kappa) ^t,
\end{split}
\end{equation}
where $(\rho(A^{*}) + \epsilon) \kappa<1$.

Second, since $\|P_t\| \to 0$ and by the continuity of the operator norm, we have
\begin{equation}\label{eq:thm1:7}
\|R^{-1} A^s R + R^{-1} P_{4,t} R\| \to \|R^{-1} A^s R\| <\rho(A^{s}) + \delta < 1.
\end{equation}

From \eqref{eq:thm1:8}-\eqref{eq:thm1:7} we can see that \eqref{ineqZ_k} satisfies the conditions of Lemma \ref{lem3}.
Applying Lemma \ref{lem3} to \eqref{ineqZ_k}, we conclude that for any $r$ satisfying
\[
\max\left( (\rho(A^{*})+\epsilon)\kappa, \rho(A^{s}) + \delta \right) < r < 1,
\]
there exists $M_1$ such that
\[
\|Z'_t\| \leq M_1 r^t.
\]

Since $Z$ and $Z'$ are related by the invertible linear transformation $R$, a similar estimate holds for $Z_t$: 
for any $r$ satisfying
\[
\max\left( (\rho(A^{*})+\epsilon)\kappa, \rho(A^{s}) + \delta \right) < r < 1,
\]
there exists $M_2$ such that
\[
\|Z_t\| \leq M_2 r^t.
\]

By the arbitrariness of $\delta$ and $\epsilon$, for any $r$ satisfying
\[
    \max\left(\rho(A^{*})\kappa, \rho(A^{s}) \right) < r < 1,
\]
there exists $M_3$ such that
\[
\|Z_t\| \leq M_3 r^t.
\]

Finally, because system \eqref{eq:decomposed-system} is finite-dimensional and linear, for any $r$ satisfying
$\max\left(  \rho(A^{*})\kappa, \rho(A^{s}) \right) < r < 1$,
there exists $M_{r}$ such that 
\[
\|Z_t\| < M_{r} \|X_0\| r ^t, \quad \forall \,  X_0.
\]
This completes the proof.
\end{proof}
\begin{remark}
    In \cite{1}, which deals with continuous-time systems, the perturbation $P_t$ is required to decay exponentially; 
    in particular, the decay rate of $P_{3,t}$ must be smaller than $\rho(A^*)^{-1}$. 
    However, as shown above, for a discrete-time LTV system of the form \eqref{eq:decomposed-system}, 
    asymptotic decoupling does not require the other components of $P_t$ to decay exponentially; 
    it suffices that $P_{3,t}$ decays with rate smaller than $\rho(A^*)^{-1}$ and the rest tend to zero. 
    By a similar line of reasoning, a similar conclusion holds for continuous-time systems: 
    only the exponential decay of $P_{3,t}$ with a sufficiently small rate is needed,
    while the other blocks need only tend to zero.
\end{remark}

\section{Application to System Synchronization}
\label{sec:application-synchronization}
In this section, we select a suitable parameter $c$ and design an update rule $K_i(t)$ that depends only on $S_i(t)$ 
to make system \eqref{eq:whole_system} achieve synchronization. 

To prove the effectiveness of this control design, we transform the multiagent systems \eqref{eq:newsystem_ab} into the form \eqref{eq:decomposed-system} 
and then apply Theorem \ref{thm1} to establish asymptotic decoupling for system \eqref{eq:decomposed-system},
which in turn implies synchronization for system \eqref{eq:whole_system}.

For system \eqref{eq:whole_system}, let $S_{\infty} = \frac{1}{N} \sum_{j=1}^{N} S_j(0)$,
we make the following assumptions:
\begin{assumption}\label{asm1}
$(S_{\infty}, B)$ is stabilizable.
\end{assumption}

\begin{assumption}\label{asm2}
$\rho(S_{\infty}) \geq 1$.
\end{assumption}

\begin{assumption}\label{asm3}
$\left( \prod_j |\lambda_j^u(S_{\infty})| \right)^{-1} > \frac{\lambda_N - \lambda_2}{\lambda_N + \lambda_2}$,
where $\lambda_j^u(S_{\infty})$ are the eigenvalues of $S_{\infty}$ with $|\lambda_j^u| \ge 1$.
\end{assumption}

The following lemmas are needed for the main result of this section.
\begin{lemma}\label{lem4}
For \eqref{eq:system_c}, under Assumptions \ref{asm2} and \ref{asm3}, 
there exists $M>0$ such that
\[
\|S_i(t)-S_{\infty}\| < M (\max(|1-c\lambda_2|,|1-c\lambda_N|))^t,\quad i = 1,\ldots,N,
\]
for any c satisfying
\[
    c \in (\frac{1-1/\rho(S_{\infty})}{\lambda_2},\frac{1+1/\rho(S_{\infty})}{\lambda_N}),
\]
where $M$ depends on $S_1(0), \ldots, S_N(0)$ and $\max(|1-c\lambda_2|,|1-c\lambda_N|)< 1/ \rho(S_{\infty})$.\\
In particular, when $c = \frac{2}{\lambda_N + \lambda_2}$,
\[
\|S_i(t) - S_{\infty}\| \leq M \left( \frac{\lambda_N - \lambda_2}{\lambda_N + \lambda_2} \right)^t,\quad i = 1,\ldots,N,
\]
this choice yields the fastest convergence rate.
\end{lemma}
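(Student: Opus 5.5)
The plan is to work with the stacked form \eqref{eq:newsystem_c}, $\hat S(t+1)=((I_N-cL)\otimes I_p)\hat S(t)$, and diagonalize it using the orthogonal matrix $Q$ from \eqref{Q}. The iteration then gives
\[
\hat S(t)=\bigl(Q\,\mathrm{diag}\bigl((1-c\lambda_1)^t,\dots,(1-c\lambda_N)^t\bigr)Q^T\otimes I_p\bigr)\hat S(0).
\]
Since $L$ is symmetric with $L\mathbf 1=0$, the first column of $Q$ can be taken as $\mathbf 1/\sqrt N$. Because $\lambda_1=0$, the mode for $k=1$ contributes exactly $\frac1N(\mathbf 1\mathbf 1^T\otimes I_p)\hat S(0)$. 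Each block of this term equals $S_\infty$, so the first step is to note that $\hat S(t)-\mathbf 1\otimes S_\infty$ is the sum over $k\ge 2$ of $(1-c\lambda_k)^t(q_kq_k^T\otimes I_p)\hat S(0)$.

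Next I bound each remaining term by $\max_{k\ge2}|1-c\lambda_k|^t\,\|\hat S(0)\|$, using that $q_kq_k^T$ has norm $1$. A more careful alternative is to use orthogonality, which gives the bound $\|(I-\frac1N\mathbf 1\mathbf 1^T)\otimes I_p\,\hat S(0)\|$. The map $\lambda\mapsto|1-c\lambda|$ is convex on $[\lambda_2,\lambda_N]$, so its maximum is attained at an endpoint, which gives the rate $\max(|1-c\lambda_2|,|1-c\lambda_N|)$. The $i$-th block $S_i(t)-S_\infty$ is bounded in norm by the whole stacked vector, so we can take $M=\|\hat S(0)-\mathbf 1\otimes S_\infty\|+1$. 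The $+1$ makes the inequality strict and also covers the degenerate case where all $S_i(0)$ are already equal. This $M$ depends only on $S_1(0),\dots,S_N(0)$.

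Then I verify that the rate is below $1/\rho(S_\infty)$ on the stated interval. Write $\rho=\rho(S_\infty)\ge1$, which holds by Assumption~\ref{asm2}. The condition $|1-c\lambda_2|<1/\rho$ is equivalent to $c\in\bigl(\frac{1-1/\rho}{\lambda_2},\frac{1+1/\rho}{\lambda_2}\bigr)$, and $|1-c\lambda_N|<1/\rho$ is equivalent to $c\in\bigl(\frac{1-1/\rho}{\lambda_N},\frac{1+1/\rho}{\lambda_N}\bigr)$. The intersection of these is $\bigl(\frac{1-1/\rho}{\lambda_2},\frac{1+1/\rho}{\lambda_N}\bigr)$, because $\lambda_2\le\lambda_N$.

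The main point is that this interval is nonempty. That requires $\lambda_N(1-1/\rho)<\lambda_2(1+1/\rho)$, i.e.\ $1/\rho>\frac{\lambda_N-\lambda_2}{\lambda_N+\lambda_2}$. Assumption~\ref{asm3} gives $(\prod_j|\lambda_j^u|)^{-1}>\frac{\lambda_N-\lambda_2}{\lambda_N+\lambda_2}$. The product of the unstable eigenvalue moduli is at least $\rho$, since every factor is at least $1$ and one factor equals $\rho$. Hence $1/\rho\ge(\prod_j|\lambda_j^u|)^{-1}>\frac{\lambda_N-\lambda_2}{\lambda_N+\lambda_2}$, as needed.

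Finally, I minimize $\max(|1-c\lambda_2|,|1-c\lambda_N|)$ over $c$. The first term decreases and the second increases near the balance point, so the optimum occurs where $1-c\lambda_2=c\lambda_N-1$, giving $c=2/(\lambda_N+\lambda_2)$ and the rate $\frac{\lambda_N-\lambda_2}{\lambda_N+\lambda_2}$. This $c$ lies in the interval by the computation above. The inequality there is non-strict ($\le M(\cdot)^t$), which is consistent with the bound obtained.
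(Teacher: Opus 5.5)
Your proposal is correct and follows essentially the same route as the paper. Both diagonalize $I_N-cL$ with the orthogonal $Q$, let the $\lambda_1=0$ mode produce $\mathbf 1\otimes S_\infty$, bound the remaining modes by $f(c)^t$ with $f(c)=\max(|1-c\lambda_2|,|1-c\lambda_N|)$, and use Assumptions~\ref{asm2} and~\ref{asm3} (through $\prod_j|\lambda_j^u(S_\infty)|\ge\rho(S_\infty)$) to show that the given interval is exactly where $f(c)<1/\rho(S_\infty)$, with minimizer $c=2/(\lambda_N+\lambda_2)$.

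You are more explicit than the paper about the constant. Note that it is the orthogonality bound, not the termwise one, that justifies $M=\|\hat S(0)-\mathbf 1\otimes S_\infty\|+1$. The only quibble is that the ``$+1$'' cannot give strictness when $f(c)=0$ (i.e.\ $\lambda_2=\lambda_N$, $c=1/\lambda_2$); in that degenerate case the lemma's strict inequality itself fails for $t\ge 1$.
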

\begin{proof}
Considering the function
\[
    f(x)=\max(|1-x\lambda_2|,|1-x\lambda_N|),
\]
we have $f(x)$ is strictly decreasing on $(-\infty, \frac{2}{\lambda_N+\lambda_2})$ 
and strictly increasing on $(\frac{2}{\lambda_N+\lambda_2}, +\infty)$.
Therefore, $f(x)$ attains its minimum 
\begin{equation}\label{eq:lem4:6}
    f(\frac{2}{\lambda_N+\lambda_2})=\frac{\lambda_N-\lambda_2}{\lambda_N+\lambda_2}.
\end{equation}
From Assumptions \ref{asm2} and \ref{asm3} we have
\begin{equation}\label{eq:lem4:7}
    \frac{\lambda_N - \lambda_2}{\lambda_N + \lambda_2}< \left( \prod_j |\lambda_j^u(S_{\infty})| \right)^{-1} \leq 1/\rho(S_{\infty}).
\end{equation}
It follows from \eqref{eq:lem4:6} and \eqref{eq:lem4:7} that there exists at least a solution to inequality
\begin{equation}\label{eq:lem4:8}
    f(x) < 1/ \rho(S_{\infty}).
\end{equation}
By solving inequality \eqref{eq:lem4:8}
we have
\begin{equation}\label{eq:lem4:9}
    x \in (\frac{1-1/\rho(S_{\infty})}{\lambda_2},\frac{1+1/\rho(S_{\infty})}{\lambda_N}).
\end{equation}
In the following, we take 
\[
    c \in (\frac{1-1/\rho(S_{\infty})}{\lambda_2},\frac{1+1/\rho(S_{\infty})}{\lambda_N}).
\]
From \eqref{eq:lem4:9} we have
\begin{equation}\label{eq:lem4:10}
    f(c) < 1/ \rho(S_{\infty}).
\end{equation}
Since $\lambda_2 \leq \cdots \leq \lambda_N$ and by \eqref{eq:lem4:10}, we have 
\begin{equation}\label{eq:lem4:3}
|1 - c\lambda_i| \leq f(c)< 1/ \rho(S_{\infty}),\quad  i=2 \ldots N.
\end{equation}
From \eqref{eq:newsystem_c}, we have
\begin{equation}\label{eq:lem4:1}
\hat{S}(t+1) = ((I_N - c L)^t \otimes I_p) \hat{S}(0).
\end{equation}
It follows from \eqref{Q} that
\begin{equation}\label{eq:lem4:4}
    I_N - c L = Q \operatorname{diag}(1, 1 - c\lambda_2, \ldots, 1 - c\lambda_N) Q^{-1},
\end{equation}
where the first column of $Q$ is $\frac{1}{\sqrt{N}} \mathbf{1}$.\\
Combining \eqref{eq:lem4:3} and \eqref{eq:lem4:4}, we have
\begin{equation}\label{eq:lem4:5}
(I_N - c L)^t \to
\begin{bmatrix}
\frac{1}{N} \mathbf{1} & \frac{1}{N} \mathbf{1} & \cdots & \frac{1}{N} \mathbf{1}
\end{bmatrix},
\end{equation}
It follows from \eqref{eq:lem4:1} and \eqref{eq:lem4:5} that
\begin{equation}
S_i(t) \to \frac{1}{N} \sum_{j=1}^{N} S_j(0) = S_{\infty}, \quad  i = 1, \dots, N,
\end{equation}
with the convergence rate $f(c)$.
\end{proof}

\begin{remark}
To show that \(\frac{\lambda_N - \lambda_2}{\lambda_N + \lambda_2}\) is the fastest convergence rate, it suffices to note that
$\min_{x} \bigl\{\max (|1 - x\lambda_2|, |1 - x\lambda_N|) \bigr\}=\frac{\lambda_N - \lambda_2}{\lambda_N + \lambda_2}$. 
(see, e.g., \cite{LinXiao2003} for a detailed analysis of optimal linear iterations for distributed averaging).
\end{remark}

\begin{lemma}\label{lem5}\cite{HENGSTERMOVRIC2013414}
Under Assumptions \ref{asm1} and \ref{asm2}, for any $\eta < \bigl(\prod_j |\lambda_j^u(S_{\infty})|\bigr)^{-1}$, 
there exists a positive definite matrix $P$ such that the following Riccati inequality holds:
\begin{align}\label{eq:Riccati}
P - S_{\infty}^T P S_{\infty} + (1 - \eta^2) \frac{S_{\infty}^T P B B^T P S_{\infty}}{B^T P B} > 0.
\end{align}
\end{lemma}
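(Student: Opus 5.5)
The plan is to recast \eqref{eq:Riccati} as a mean-square stabilization problem with a multiplicative input perturbation, and then use the Mahler-measure characterization of such problems. Write $A=S_{\infty}$. For a fixed $P>0$ and any row gain $K$, expand
\begin{equation*}
\Phi_P(K)=(A-BK)^TP(A-BK)+\eta^2K^TB^TPBK .
\end{equation*}
Completing the square in $K$ shows that the minimum of $\Phi_P(K)$ (in the Loewner order) is attained at $K^\star=(B^TPB)^{-1}B^TPA$. The minimal value is
\begin{equation*}
\Phi_P(K^\star)=A^TPA-(1-\eta^2)\frac{A^TPBB^TPA}{B^TPB}.
\end{equation*}
Hence \eqref{eq:Riccati} holds for some $P>0$ if and only if there exist $P>0$ and $K$ with $\Phi_P(K)<P$.

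This second condition is exactly a Lyapunov certificate of mean-square stability for
\begin{equation*}
x_{t+1}=(A-(1+\Delta_t)BK)x_t ,
\end{equation*}
where $\Delta_t$ is i.i.d. with zero mean and variance $\eta^2$. So it suffices to show that this stochastic system is mean-square stabilizable by state feedback whenever $\eta<\bigl(\prod_j|\lambda_j^u(A)|\bigr)^{-1}$.

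To prove that, I would first bring $A$ to block form $\operatorname{diag}(A_u,A_{st})$ by a similarity transformation, with $\rho(A_{st})<1$ and $A_u$ carrying all eigenvalues of modulus at least one. By Assumption~\ref{asm1}, $(A_u,B_u)$ is controllable, and since $B$ is a single column, $A_u$ can be put in controllable canonical form. Using feedback only through $A_u$, the stable block can afterwards be absorbed by a Lyapunov argument: the coupling is a cascade driven by a mean-square stable signal, so a block-diagonal $P$ with a large weight on the unstable part works.

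For the unstable part, I would study the modified Riccati iteration
\begin{equation*}
P_{k+1}=A_u^TP_kA_u-(1-\eta^2)\frac{A_u^TP_kB_uB_u^TP_kA_u}{B_u^TP_kB_u}+I .
\end{equation*}
The aim is to show that $P_k$ stays bounded if and only if $\eta^2\prod_j|\lambda_j^u|^2<1$. The scalar case is immediate: the recursion becomes $p_{k+1}=\eta^2a^2p_k+1$, which is bounded precisely when $\eta|a|<1$. For the general case I would use a frequency-domain argument. The mean-square stabilizing feedback for the multiplicative-noise loop corresponds to making the complementary sensitivity $T(z)$ satisfy $\eta^2\|T\|_{H_2}^2<1$. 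The discrete Bode/Jensen integral then gives $\inf_K\|T\|_{H_2}^2=\prod_j|\lambda_j^u|^2-1$ in the signal-to-noise-ratio formulation of Braslavsky--Middleton--Freudenberg, or equivalently $\prod_j|\lambda_j^u|^{2}$ under the normalization used here. Once boundedness is established, the limit $P$ (or an iterate with the slack $I$) gives the strict inequality.

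I expect the main obstacle to be this multivariable step. Showing that the infimal achievable noise gain equals exactly the Mahler measure $\prod_j|\lambda_j^u|$, and not merely $\|A_u\|$ or $\rho(A_u)$, requires either the Jensen-formula lower and upper bound with an explicit near-optimal controller, or a careful convergence proof of the modified Riccati iteration. My first attempt would be to build the controller explicitly in controllable canonical form, choosing $K$ so that the closed-loop poles are the reflections $1/\bar\lambda_j^u$ (the minimum-energy, all-pass design). I would then compute $\|T\|_{H_2}^2$ directly for that $K$. Since $\eta<1/\prod_j|\lambda_j^u|$ is strict, small perturbations of this controller should leave room for the strict inequality.
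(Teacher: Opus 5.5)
\textbf{There is a genuine gap, and it is in your first step.} Write $\beta=B^TPB$, which is a scalar. The $K$-dependent part of $\Phi_P(K)$ is $-A^TPBK-K^TB^TPA+(1+\eta^2)\beta K^TK$. So the minimizer is $K^\star=\frac{1}{1+\eta^2}\frac{B^TPA}{B^TPB}$, and
\[
\min_K\Phi_P(K)=A^TPA-\frac{1}{1+\eta^2}\,\frac{A^TPBB^TPA}{B^TPB}.
\]
The expression you wrote is only the value of $\Phi_P$ at the non-optimal gain $\frac{B^TPA}{B^TPB}$. Since $\frac{1}{1+\eta^2}>1-\eta^2$ for $\eta\neq 0$, only one implication holds: \eqref{eq:Riccati} implies $\Phi_P(K)<P$ for some $K$. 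The direction you actually need fails.

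A scalar counterexample: take $A=2$, $B=1$, $\eta^2=0.3$. Then $\min_k[(2-k)^2+0.3k^2]=1.2/1.3<1$, so your stochastic loop is mean-square stabilizable. Yet the left side of \eqref{eq:Riccati} equals $p(1-4\eta^2)=-0.2p<0$ for every $p>0$.

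In general, for your noise model (unit mean, variance $\eta^2$), mean-square stabilizability holds iff $\eta^2(\prod_j|\lambda_j^u|^2-1)<1$. That is exactly the Braslavsky--Middleton--Freudenberg/Elia value you quote, and it is strictly weaker than $\eta\prod_j|\lambda_j^u|<1$. Your phrase ``or equivalently $\prod_j|\lambda_j^u|^2$ under the normalization used here'' is where this mismatch is hidden; for the model you wrote down it is not a matter of normalization. The same inconsistency separates your modified Riccati iteration, which carries the factor $1-\eta^2$, from your $H_2$ argument, which governs the iteration with factor $\frac{1}{1+\eta^2}$.

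\textbf{How to repair it.} Calibrate the noise so that $(\mathbb{E}\xi_t)^2/\mathbb{E}\xi_t^2=1-\eta^2$. For example, take unit mean and variance $\sigma^2=\eta^2/(1-\eta^2)$; this is finite because $\eta<(\prod_j|\lambda_j^u|)^{-1}\le 1$ under Assumption~\ref{asm2}. Then study $x_{t+1}=(A-\xi_tBK)x_t$.
\begin{itemize}
\item The minimum $\min_K\mathbb{E}[(A-\xi_tBK)^TP(A-\xi_tBK)]$ is exactly $A^TPA-(1-\eta^2)\frac{A^TPBB^TPA}{B^TPB}$, attained at $K=(1-\eta^2)\frac{B^TPA}{B^TPB}$. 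So your claimed equivalence becomes true.
\item The known value $\inf_K\|T\|_{H_2}^2=\prod_j|\lambda_j^u|^2-1$ turns $\sigma^2\|T\|_{H_2}^2<1$ into $\eta\prod_j|\lambda_j^u|<1$, as required.
\item Your unstable/stable block decomposition, the small-weight argument for the stable block, and the reflected-pole controller then go through as you describe.
\end{itemize}

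\textbf{A deterministic alternative.} This route is closer to how the inequality is used in Theorem~\ref{thm2}, where one common $P$ must handle the gains $2\lambda_j/(\lambda_2+\lambda_N)\in[1-\eta,1+\eta]$. It uses the sector-bound formulation: there exists $K$ with $(A-(1+\Delta)BK)^TP(A-(1+\Delta)BK)<P$ for all $|\Delta|\le\eta$ iff \eqref{eq:Riccati} holds, with optimal gain $K=\frac{B^TPA}{B^TPB}$. Here the relevant quantity is $\inf_K\|T\|_{H_\infty}=\prod_j|\lambda_j^u|$, not an $H_2$ norm.

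The paper gives no argument of its own and simply imports the lemma from the cited reference. A self-contained proof along your lines therefore has to get this calibration right. One minor point: as literally stated, the lemma should assume $|\eta|<(\prod_j|\lambda_j^u|)^{-1}$, since only $\eta^2$ enters.
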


We now state the main theorem of this section.
\begin{theorem}\label{thm2}
For system \eqref{eq:whole_system}, under Assumptions~\ref{asm1}--\ref{asm3}, 
there exist suitable $c$ and $K_i(t)$ that depends only on $S_i(t)$ such that the multiagent systems synchronizes.
Specifically, we take
\[
c \in (\frac{1-1/\rho(S_{\infty})}{\lambda_2},\frac{1+1/\rho(S_{\infty})}{\lambda_N}),
\]
\[
K_i(t) = \frac{2}{\lambda_2 + \lambda_N} \frac{B^T P S_i(t)}{B^T P B},
\]
where $P$ is a solution of the Riccati inequality \eqref{eq:Riccati}.
\end{theorem}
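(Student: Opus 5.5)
The approach is to bring \eqref{eq:newsystem_ab} into the form \eqref{eq:decomposed-system}, with $A^*=S_\infty$, and then apply Theorem \ref{thm1}. Write $S_i(t)=S_\infty+\Delta_i(t)$ and $\Delta(t)=\operatorname{diag}(\Delta_1(t),\dots,\Delta_N(t))$. By Lemma \ref{lem4}, $\|\Delta(t)\|\le M f(c)^t$ with $f(c)<1/\rho(S_\infty)$. The closed loop then reads
\[
\xi(t+1)=\Bigl[I_N\otimes S_\infty-\bar c\,L\otimes \tfrac{BB^TPS_\infty}{B^TPB}+\Delta(t)-\bar c\,\operatorname{diag}\bigl(\tfrac{BB^TP\Delta_i(t)}{B^TPB}\bigr)(L\otimes I_p)\Bigr]\xi(t),
\]
where $\bar c=2/(\lambda_2+\lambda_N)$. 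Apply the coordinate change $X=(Q^T\otimes I_p)\xi$, with $Q$ from \eqref{Q} and first column $\mathbf 1/\sqrt N$. The nominal part becomes block diagonal, with blocks $S_\infty-\bar c\lambda_k\frac{BB^TPS_\infty}{B^TPB}$ for $k=1,\dots,N$. The block $k=1$ is $A^*=S_\infty$, and the remaining blocks form $A^s=\operatorname{diag}_{k\ge2}(\cdot)$. The transformed perturbation is $P_t=(Q^T\otimes I_p)[\cdots](Q\otimes I_p)$. Its norm is bounded by a constant times $\|\Delta(t)\|$, so $\|P_t\|\le M'f(c)^t$. In particular $\|P_{3,t}\|\le M'\kappa^t$ with $\kappa=f(c)<1/\rho(A^*)$, and $\|P_t\|\to0$.

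The main step is to show $\rho(A^s)<1$, i.e. that each block $S_\infty-\bar c\lambda_k\frac{BB^TPS_\infty}{B^TPB}$ with $k\ge2$ is Schur. Set $\mu_k=\bar c\lambda_k$ and note that $|1-\mu_k|\le\frac{\lambda_N-\lambda_2}{\lambda_N+\lambda_2}<\eta$. This requires choosing $\eta$ with $\frac{\lambda_N-\lambda_2}{\lambda_N+\lambda_2}<\eta<(\prod_j|\lambda^u_j|)^{-1}$, which Assumption \ref{asm3} permits, and then taking $P$ from Lemma \ref{lem5} for this $\eta$. I would use $V(x)=x^TPx$ and expand, with $G=\frac{BB^TP}{B^TPB}$:
\[
(S_\infty-\mu G S_\infty)^TP(S_\infty-\mu GS_\infty)=S_\infty^TPS_\infty-(1-(1-\mu)^2)\frac{S_\infty^TPBB^TPS_\infty}{B^TPB}.
\]
This identity uses $B^TPB$ scalar and the cross terms combining as $-2\mu+\mu^2$. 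Since $(1-\mu)^2<\eta^2$, we get $1-(1-\mu)^2>1-\eta^2$. The term $\frac{S_\infty^TPBB^TPS_\infty}{B^TPB}$ is positive semidefinite, so the Riccati inequality \eqref{eq:Riccati} yields $P-(\cdot)^TP(\cdot)>0$. This is a discrete Lyapunov inequality, so each block is Schur, and hence $\rho(A^s)<1$. Assumption \ref{asm2} gives $\rho(A^*)\ge1$.

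All hypotheses of Theorem \ref{thm1} now hold, so $\|Z_t\|\to0$ exponentially. Here $Z_t$ collects the components of $X_t$ along $q_2,\dots,q_N$, which are orthogonal to $\mathbf 1$. Hence $\xi(t)-\mathbf 1\otimes\frac{1}{\sqrt N}Y_t=(Q_{2:N}\otimes I_p)Z_t\to0$, which gives $\|\xi_i(t)-\xi_j(t)\|\to0$.

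I expect the obstacle to be the bookkeeping in the perturbation term. It must be verified that $P_t$ contains only $\Delta_i(t)$ factors times bounded constants, which holds because $K_i(t)$ is linear in $S_i(t)$. The Schur argument itself is the one from \cite{2}, reused with $S_\infty$ in place of $S$.
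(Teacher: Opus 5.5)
Your proposal is correct and follows essentially the same route as the paper. You split $K_i(t)$ around $K_\infty$ so that the perturbation decays at rate $f(c)<1/\rho(S_\infty)$ by Lemma~\ref{lem4}, block-diagonalize with the orthonormal Laplacian eigenbasis, show each $S_\infty-\lambda_k BK_\infty$ ($k\ge 2$) is Schur via the same Lyapunov expansion combined with the Riccati inequality for an $\eta\ge\frac{\lambda_N-\lambda_2}{\lambda_N+\lambda_2}$, and then invoke Theorem~\ref{thm1}; your coefficient $1-(1-\mu_k)^2$ is exactly the paper's $-\frac{4\lambda_k(\lambda_k-\lambda_2-\lambda_N)}{(\lambda_2+\lambda_N)^2}$, and the remaining differences (strict versus non-strict choice of $\eta$, and your correct $\frac{1}{\sqrt N}$ scaling in the final consensus step) are cosmetic.
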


\begin{proof}
By Assumption \ref{asm3} and Lemma \ref{lem5}, we can choose $\eta$ satisfying
\[
\frac{\lambda_N - \lambda_2}{\lambda_N + \lambda_2} \leq \eta < (\prod_j |\lambda_j^u(S_{\infty})|)^{-1}.
\]
Then the result $P$ for \eqref{eq:Riccati} satisfies
\begin{equation}\label{eq1}
P - S_{\infty}^T P S_{\infty} + \frac{4 \lambda_2 \lambda_N}{(\lambda_2 + \lambda_N)^2} \frac{S_{\infty}^T P B B^T P S_{\infty}}{B^T P B} > 0.
\end{equation}
Considering the control:
\[
c \in (\frac{1-1/\rho(S_{\infty})}{\lambda_2},\frac{1+1/\rho(S_{\infty})}{\lambda_N}),
\]
\[
K_i(t) = \frac{2}{\lambda_2 + \lambda_N} \frac{B^T P S_i(t)}{B^T P B},
\]

we now prove the effectiveness of this control.
Let
\begin{equation}\label{eq:K}
K_{\infty} = \frac{2}{\lambda_2 + \lambda_N} \frac{B^T P S_{\infty}}{B^T P B}.
\end{equation}
From Lemma \ref{lem4}, there exists $M'>0$ such that
\begin{equation}\label{eq2}
\|K_i(t) - K_{\infty}\| < M' \kappa^t,
\end{equation}
where $\kappa =\max(|1-c\lambda_2|,|1-c\lambda_N|)<1/\rho(S_{\infty})$.\\
Rewrite \eqref{eq:newsystem_ab} as
\begin{equation}\label{eq:thm2:1}
\xi(t+1) = [(I_N \otimes S_{\infty}) - (L \otimes (B K_{\infty}))] \xi(t) + R(t) \xi(t),
\end{equation}
\begin{equation*}
R(t) = S(t) - (I_N \otimes S_{\infty}) - \left[H(t) (L \otimes I_p) - L \otimes (B K_{\infty})\right]. \label{eq5}
\end{equation*}
It follows from Lemma \ref{lem4} and \eqref{eq2} that there exists $M''$ such that
\begin{align}\label{eq_R}
\|R(t)\| < M'' \kappa^t,
\end{align}
where $\kappa <1/\rho(S_{\infty})$.

Consider the following transformation:
\begin{align}\label{eq3}
T &= \begin{bmatrix}
\frac{1}{\sqrt{N}} \mathbf{1} & U
\end{bmatrix} \otimes I_p,\\
\xi &= T \begin{bmatrix}
\sigma \\ \zeta
\end{bmatrix}, \quad \sigma \in \mathbb{R}^p,\; \zeta \in \mathbb{R}^{(N-1)p}.
\end{align}
Then the transformed system \eqref{eq:thm2:1} becomes
\begin{equation*}
\begin{split}
\begin{bmatrix}
\sigma(t+1) \\ \zeta(t+1)
\end{bmatrix}
=
&T^{-1}(I_N \otimes S_{\infty} - L \otimes B K_{\infty}) T
\begin{bmatrix}
\sigma(t) \\ \zeta(t)
\end{bmatrix}\\
&+ T^{-1} R(t) T \begin{bmatrix}
\sigma(t) \\ \zeta(t)    
\end{bmatrix},
\end{split}
\end{equation*}
i.e.,
\begin{equation}\label{eq:Lambda}
\begin{bmatrix}
\sigma(t+1) \\ \zeta(t+1)
\end{bmatrix}
= \Lambda
\begin{bmatrix}
\sigma(t) \\ \zeta(t)
\end{bmatrix}
+ T^{-1} R(t) T \begin{bmatrix}
\sigma(t) \\ \zeta(t)
\end{bmatrix},
\end{equation}
where
\[
\Lambda = \operatorname{diag}(S_{\infty}, S_{\infty} - \lambda_2 B K_{\infty}, \ldots, S_{\infty} - \lambda_N B K_{\infty}).
\]
To apply Theorem \ref{thm1} to system \eqref{eq:Lambda}, we need to verify that
\[
\rho(S_{\infty} - \lambda_j B K_{\infty}) < 1  \qquad j = 2, \dots, N .
\]
For $S_{\infty} - \lambda_j B K_{\infty}$, we have
\begin{equation}\label{eq:thm2:2}
\begin{split}
&(S_{\infty} - \lambda_j B K_{\infty})^T P (S_{\infty} - \lambda_j B K_{\infty}) - P\\
&= S_{\infty}^T P S_{\infty} - P + \frac{4\lambda_j (\lambda_j - (\lambda_2 + \lambda_N))}{(\lambda_N + \lambda_2)^2} \frac{S_{\infty}^T P B B^T P S_{\infty}}{B^T P B}\\
&\leq S_{\infty}^T P S_{\infty} - P - \frac{4\lambda_2 \lambda_N}{(\lambda_N + \lambda_2)^2} \frac{S_{\infty}^T P B B^T P S_{\infty}}{B^T P B}.
\end{split}
\end{equation}
By \eqref{eq1}, the right-hand side of \eqref{eq:thm2:2} is negative definite, hence,
\begin{equation}\label{eq:thm2:3}
    \rho(S_{\infty} - \lambda_j B K_{\infty}) < 1 \qquad j = 2, \dots, N.
\end{equation}

By \eqref{eq_R} we have $T^{-1}R(t)T$ decays exponentially to $\mathbf{0}$ with the rate $\kappa$, and $\kappa <1/\rho(S_{\infty})$.
It follows from \eqref{eq:thm2:3} that system \eqref{eq:Lambda} satisfies the assumptions and the condition of Theorem \ref{thm1}, 
and hence is asymptotically decoupled, i.e., $\zeta(t) \to 0$.

Finally, from the transformation \eqref{eq3}, we have
\begin{equation}\label{eq:thm2:4}
\xi(t) - \frac{1}{\sqrt{N}} \mathbf{1} \otimes \sigma(t) = (U \otimes I_p) \zeta(t).
\end{equation}
It follows from \eqref{eq:thm2:4} and the fact $\zeta(t) \to \mathbf{0}$ that
\[
\|\xi_i(t) - \sigma(t)\| \to 0, \quad i=1 \ldots N.
\]
Hence,
\[
\|\xi_i(t) - \xi_j(t)\| \leq \|\xi_i(t) - \sigma(t)\| + \|\xi_j(t) - \sigma(t)\| \to 0, \quad \forall \, i, j.
\]
Thus, for any initial conditions $\xi_1(0),\ldots,\xi_N(0)$, system \eqref{eq:whole_system} achieves synchronization.
\end{proof}

\begin{remark}
Assumption \ref{asm2} is not strictly necessary; 
if it does not hold, one can simply take $c$ satisfying $0<c< 2/\lambda_N$ and $K_i(t) = \mathbf{0}$ such that system \eqref{eq:whole_system} achieves synchronization.
\end{remark}
\begin{remark}
    Theorem \ref{thm2} extends the result for homogeneous multiagent systems in \cite{2} to heterogeneous systems. 
    In \cite{2}, synchronization requires the common dynamics $S$ to satisfy
    \[
    \left(\prod_{j}|\lambda_{j}^{u}(S)|\right)^{-1} > \frac{\lambda_{N} - \lambda_{2}}{\lambda_{N} + \lambda_{2}}.
    \]
Theorem 2 replaces $S$ with the average dynamics $S_{\infty} = \frac{1}{N}\sum_{j=1}^{N}S_{j}(0)$. 
Thus, the conditions in Theorem 2 reduce to those in \cite{2} when all agents have identical dynamics.
\end{remark}
\begin{remark}
The condition in \cite{5} requires the agents' initial dynamics to be sufficiently close to each other. 
In contrast, Theorem \ref{thm2} remains valid even when the agents' initial dynamics exhibit arbitrarily large heterogeneity. 
See Example \ref{example1}.
\end{remark}
\begin{remark}
    To achieve synchronization, \cite{5} assumes $S_i(t) \to S_\infty$ at a rate $\kappa < 1/\|S_\infty\|$. 
    We, however, relax this condition to $\kappa < 1/\rho(S_\infty)$ 
    and, rather than simply assuming it, provide a design that enforces this rate.
\end{remark}
\begin{remark}
    Theorem \ref{thm2} guarantees the feasibility of autonomous synchronization for general heterogeneous multiagent systems 
    and constitute significant progress toward the discrete-time conjecture in \cite{yan2022}.
    However, the selection of the coupling strength $c$ and the positive definite matrix $P$ still relies on global information. 
    Therefore, the design of a fully distributed control strategy is left for future research.
\end{remark}
Combining Theorems \ref{thm1} and \ref{thm2}, we can further obtain the following estimate of the convergence rate for synchronization .

\begin{theorem}\label{thm3}
Under Assumptions \ref{asm1}, \ref{asm2}, \ref{asm3}, with the control design given in Theorem \ref{thm2}:
\[
c \in (\frac{1-1/\rho(S_{\infty})}{\lambda_2},\frac{1+1/\rho(S_{\infty})}{\lambda_N}),
\]
\[
K_i(t) = \frac{2}{\lambda_2 + \lambda_N} \frac{B^T P S_i(t)}{B^T P B},
\]
let
\[
r^* = \max\left( \rho(S_{\infty})f(c) , \max_{2 \leq j \leq N} (\rho(S_{\infty} - \lambda_j B K_{\infty}))\right),
\]
where $f(c)=\max(|1-c\lambda_2|,|1-c\lambda_N|)$ and $K_{\infty}$ is defined in \eqref{eq:K}.
Then for any $r$ satisfying $r^* < r < 1$, there exists $M_r$ such that for any initial state $\xi(0) \in \mathbb{R}^{Np}$,
\[
\|\xi_i(t) - \xi_j(t)\| < M_r \|\xi(0)\| r^t, \quad 1 \leq i, j \leq N.
\]
In particular, when $c = \frac{2}{\lambda_2 + \lambda_N}$, 
the expression for $r^*$ simplifies to
\[
    \max\left( \rho(S_{\infty}) \frac{\lambda_N - \lambda_2}{\lambda_N + \lambda_2}, \max_{2 \leq j \leq N} \rho(S_{\infty} - \lambda_j B K_{\infty}) \right).
\]
\end{theorem}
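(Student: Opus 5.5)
The plan is to reuse the reduction from the proof of Theorem \ref{thm2} and apply the quantitative bound of Theorem \ref{thm1} to the transformed system \eqref{eq:Lambda}. In the coordinates $(\sigma,\zeta)$ given by \eqref{eq3}, the unperturbed part is block diagonal with $A^*=S_{\infty}$ and
\[
A^s=\operatorname{diag}(S_{\infty}-\lambda_2 BK_{\infty},\ldots,S_{\infty}-\lambda_N BK_{\infty}).
\]
The perturbation is $P_t=T^{-1}R(t)T$. Since $A^s$ is block diagonal, $\rho(A^s)=\max_{2\le j\le N}\rho(S_{\infty}-\lambda_j BK_{\infty})$, and this is $<1$ by \eqref{eq:thm2:3}. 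Assumption \ref{asm2} gives $\rho(A^*)\ge 1$.

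Next I need the decay rate of the off-diagonal block $P_{3,t}$. By \eqref{eq_R}, $\|P_{3,t}\|\le\|P_t\|\le\|T^{-1}\|\|T\|M''\kappa^t$ with $\kappa=f(c)$. Lemma \ref{lem4} gives $f(c)<1/\rho(S_{\infty})$ for every $c$ in the stated interval, so the hypotheses of Theorem \ref{thm1} hold with this $\kappa$. I should check that \eqref{eq_R} really gives rate exactly $f(c)$, with no loss. This holds because $S_i(t)-S_{\infty}$ is bounded by $M f(c)^t$ (Lemma \ref{lem4}), and $K_i(t)-K_{\infty}$ depends linearly on $S_i(t)-S_{\infty}$. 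Hence $\max(\rho(A^s),\kappa\rho(A^*))=r^*$.

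Theorem \ref{thm1} then gives, for every $r^*<r<1$, a constant $M_r'$ with $\|\zeta(t)\|\le M_r'\|X_0\|r^t$, where $X_0=(\sigma(0)^T,\zeta(0)^T)^T=T^{-1}\xi(0)$. So $\|X_0\|\le\|T^{-1}\|\|\xi(0)\|$. Note that $T$ is orthogonal if $U$ is chosen to complete $\frac{1}{\sqrt N}\mathbf 1$ to an orthogonal matrix, for example the last $N-1$ columns of $Q$ in \eqref{Q}; then $\|T^{-1}\|=1$. From \eqref{eq:thm2:4}, $\xi_i(t)-\xi_j(t)=((e_i-e_j)^TU\otimes I_p)\zeta(t)$, whose norm is at most $\sqrt2\,\|\zeta(t)\|$. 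Taking $M_r=\sqrt2 M_r'\|T^{-1}\|$ gives the claim. A strict inequality is obtained by enlarging $M_r$ slightly; the case $\xi(0)=0$ is trivial, so one can simply state it as $\le$ or absorb a factor.

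For the special case $c=2/(\lambda_2+\lambda_N)$, I substitute \eqref{eq:lem4:6}, $f(c)=\frac{\lambda_N-\lambda_2}{\lambda_N+\lambda_2}$, into $r^*$. This value lies in the admissible interval by \eqref{eq:lem4:7}.

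The only delicate point I expect is bookkeeping rather than analysis. I need to confirm that the decay rate of $T^{-1}R(t)T$ is exactly $f(c)$, so that $\kappa\rho(A^*)=\rho(S_{\infty})f(c)$ enters $r^*$ without an $\epsilon$ loss. I also need to confirm that the block structure of \eqref{eq:Lambda} matches \eqref{eq:decomposed-system}, with $Y=\sigma$ and $Z=\zeta$. Both follow directly from Lemma \ref{lem4} and the construction in the proof of Theorem \ref{thm2}.
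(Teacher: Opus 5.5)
Your proposal is correct and is exactly the argument the paper intends by ``combining Theorems~\ref{thm1} and~\ref{thm2}''. You apply Theorem~\ref{thm1} to system \eqref{eq:Lambda} with $A^*=S_\infty$, $A^s=\operatorname{diag}(S_\infty-\lambda_jBK_\infty)$ and $\kappa=f(c)$ from \eqref{eq_R}, then pull the bound on $\zeta$ back through $T$; your added bookkeeping fills in details the paper leaves implicit (the exact rate $f(c)$ from the linear dependence of $K_i(t)-K_\infty$ on $S_i(t)-S_\infty$, $\|T^{-1}\|=1$ with $U$ taken from $Q$, and reading the strict inequality as $\le$ when $\xi(0)=0$).
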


\begin{remark}
We only guarantee that for any $r$ with $r^* < r < 1$, the convergence is faster than $r^t$ (i.e., $\|\xi_i(t) - \xi_j(t)\| \leq C r^t$); 
whether this range $(r^{*},1)$ is optimal remains unknown.
\end{remark}
\begin{remark}
The choice $c = \frac{2}{\lambda_2 + \lambda_N}$ minimizes $r^*$ among all admissible $c$ in Theorem \ref{thm3},
and therefore yields the optimal convergence rate estimate within our analysis framework. 
Nevertheless, it is not claimed that this choice achieves the fastest synchronization rate for system \eqref{eq:whole_system}.
\end{remark}
\section{Numerical Simulation}
\label{sec:numerical-simulation}

To validate the effectiveness of Theorems \ref{thm2} and \ref{thm3}, am example with its numerical simulations is presented.

\begin{example}\label{example1}

Consider a weighted undirected network with $4$ agents, whose Laplacian matrix is
\[
L = \begin{bmatrix}
7 & -1 & -2 & -4 \\
-1 & 3 & -2 & 0 \\
-2 & -2 & 7 & -3 \\
-4 & 0 & -3 & 7
\end{bmatrix}.
\]
The eigenvalues are
\[
\lambda_1 = 0,\; \lambda_2 = 3.5714,\; \lambda_3 = 9.0784,\; \lambda_4 = 11.3503.
\]
The state dimension is $p=3$, and the input matrix is $B = [0, 0, 1]^T$.
The initial dynamics matrices are
\begin{align*}
S_1(0) &= \begin{bmatrix}0 & 0 & 0 \\ 0 & 0 & 0 \\ 0 & 0 & 0\end{bmatrix}, &
S_2(0) &= \begin{bmatrix}0 & 2 & 0 \\ 0 & 0 & 2 \\ 0 & 0 & 3\end{bmatrix}, \\
S_3(0) &= \begin{bmatrix}0 & 1 & 0 \\ 0 & 0 & 1 \\ 0 & 0 & 2\end{bmatrix}, &
S_4(0) &= \begin{bmatrix}0 & 1 & 0 \\ 0 & 0 & 1 \\ 0 & 0 & 1\end{bmatrix},
\end{align*}
which satisfy Assumption \ref{asm1} and \ref{asm2}.

We compute
\[
\left( \prod_j |\lambda_j^u(S_{\infty})| \right)^{-1} = 0.667 > 0.5214 = \left( \frac{\lambda_N - \lambda_2}{\lambda_N + \lambda_2} \right),
\]
therefore, Assumption \ref{asm3} is satisfied.

According to Theorem \ref{thm2}, we select $c=\frac{2}{\lambda_4 + \lambda_2}=0.1340$ and choose $\eta= 0.6$ such that $\eta$ satisfies 
\[
    0.5214=\frac{\lambda_N - \lambda_2}{\lambda_N + \lambda_2} \leq \eta < (\prod_j |\lambda_j^u(S_{\infty})|)^{-1}=0.667.
\]
By numerically solving the Riccati inequality \eqref{eq:Riccati}, we obtain the Lyapunov matrix $P$ as
\[
P = \begin{bmatrix}
0.10 & 0 & 0 \\
0 & 0.20 & 0 \\
0 & 0 & 2.3053
\end{bmatrix}.
\]
Therefore, we design control matrix
\[
    K_i(t) = \frac{2}{\lambda_2 + \lambda_4} \frac{B^T P S_i(t)}{B^T P B}.
\]
\begin{figure}[htbp]  
\centering
\subfloat[First state component]{
    \includegraphics[trim=0 150 0 150, clip, width=1.0\linewidth, height=5cm]{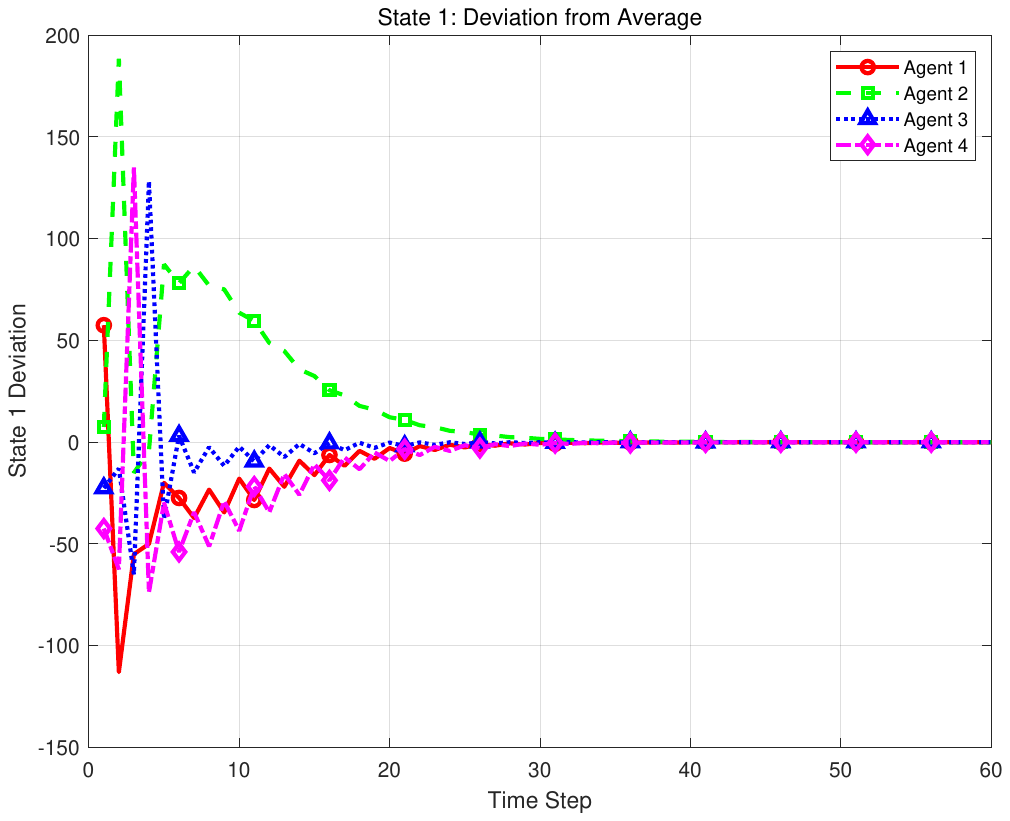} 
    \label{fig:state1}}

\subfloat[Second state component]{
    \includegraphics[trim=0 150 0 150, clip, width=1.0\linewidth, height=5cm]{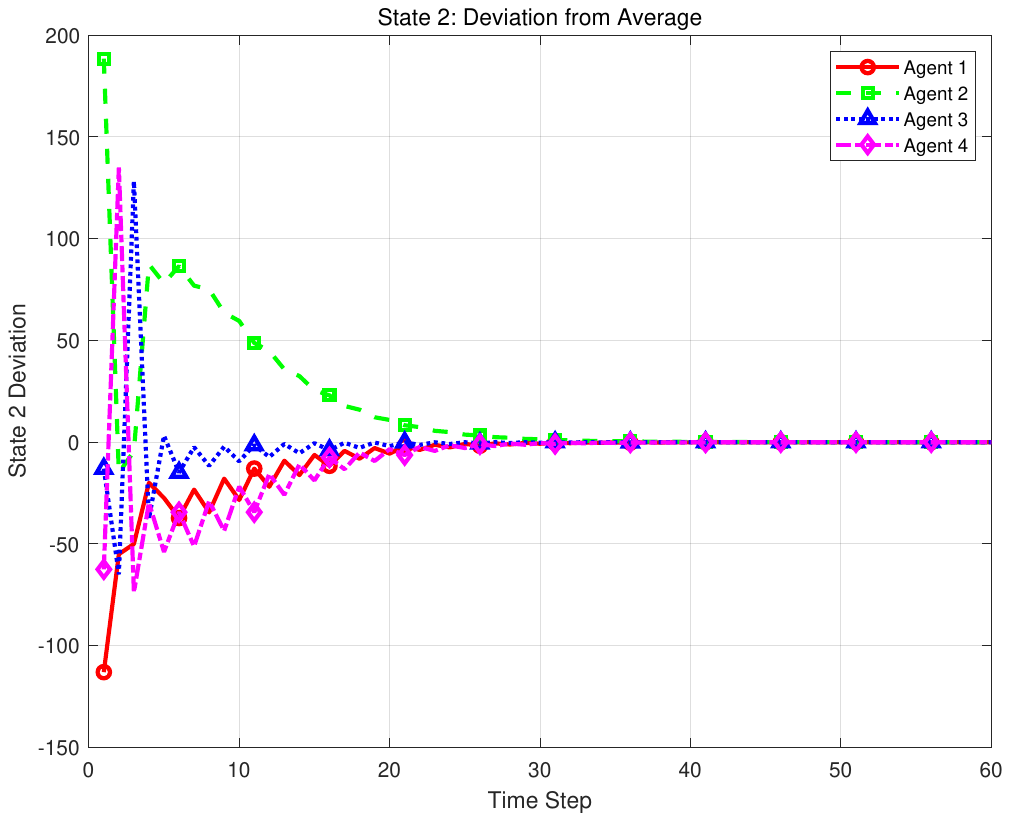}
    \label{fig:state2}}

\subfloat[Third state component]{
    \includegraphics[trim=0 150 0 150, clip, width=1.0\linewidth, height=5cm]{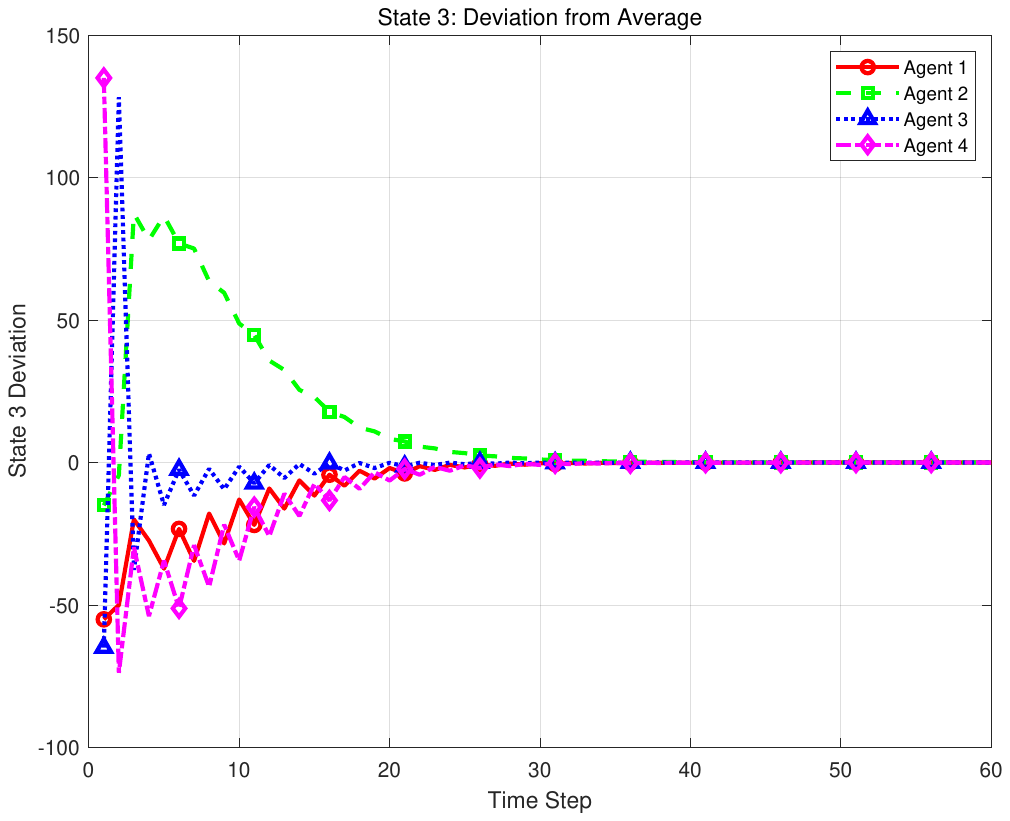}
    \label{fig:state3}}

\caption{Deviation of state components from the average.}
\label{fig:state_deviations}
\end{figure}

\begin{figure}[htbp]
\centering
\subfloat[Norm of state deviation (semilog)]{
    \includegraphics[trim=0 150 0 150, clip, width=1.0\linewidth, height=5cm]{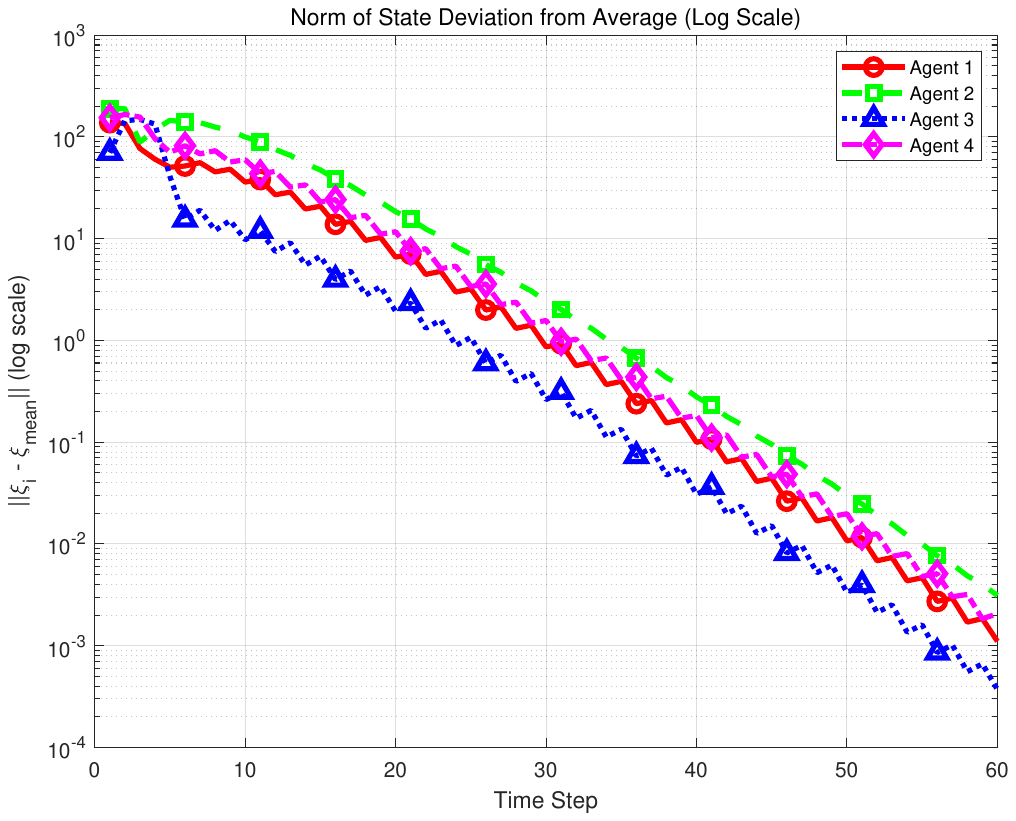}
    \label{fig4}}

\subfloat[Ratio to $0.9^t$ (semilog)]{
    \includegraphics[trim=0 150 0 150, clip, width=1.0\linewidth, height=5cm]{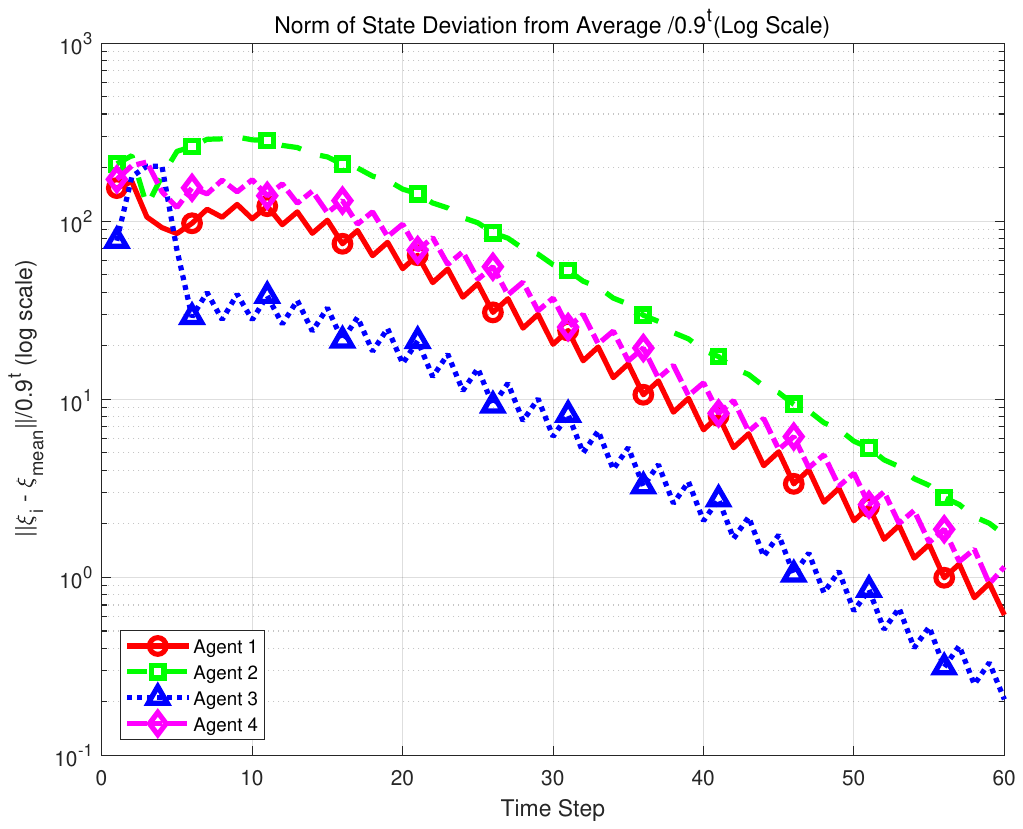}
    \label{fig5}}

\subfloat[Ratio to $0.8^t$ (semilog)]{
    \includegraphics[trim=0 150 0 150, clip, width=1.0\linewidth, height=5cm]{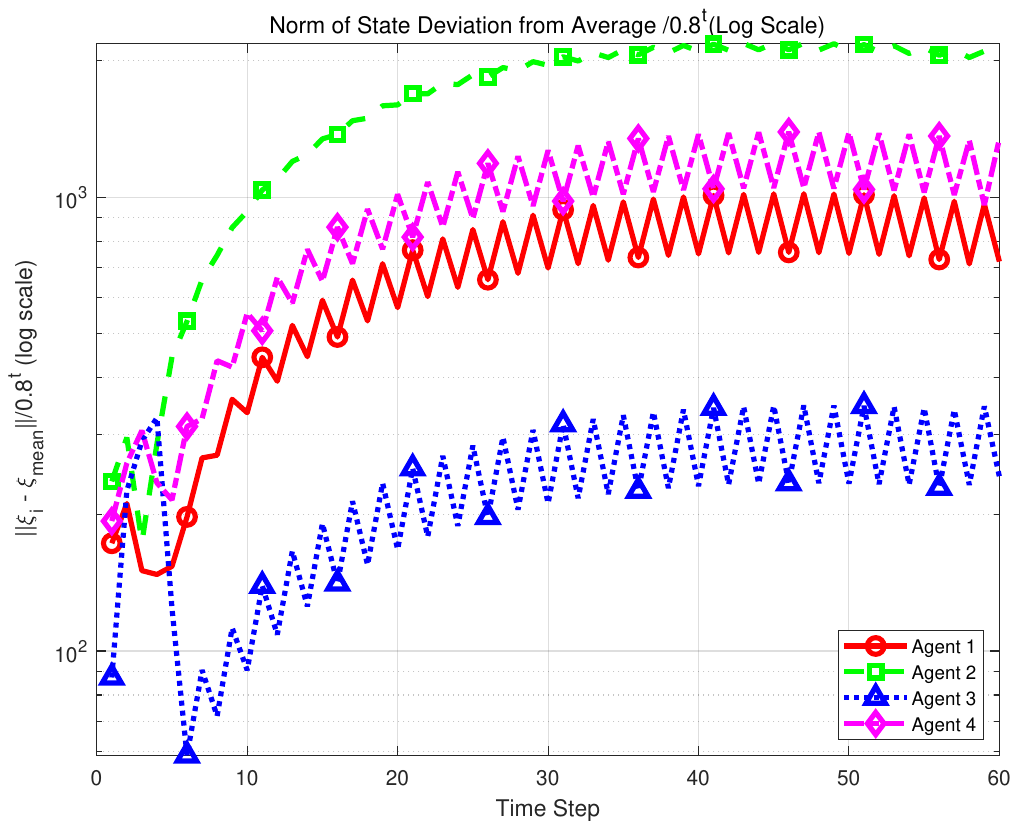}
    \label{fig6}}

\subfloat[Ratio to $0.7^t$ (semilog)]{
    \includegraphics[trim=0 150 0 150, clip, width=1.0\linewidth, height=5cm]{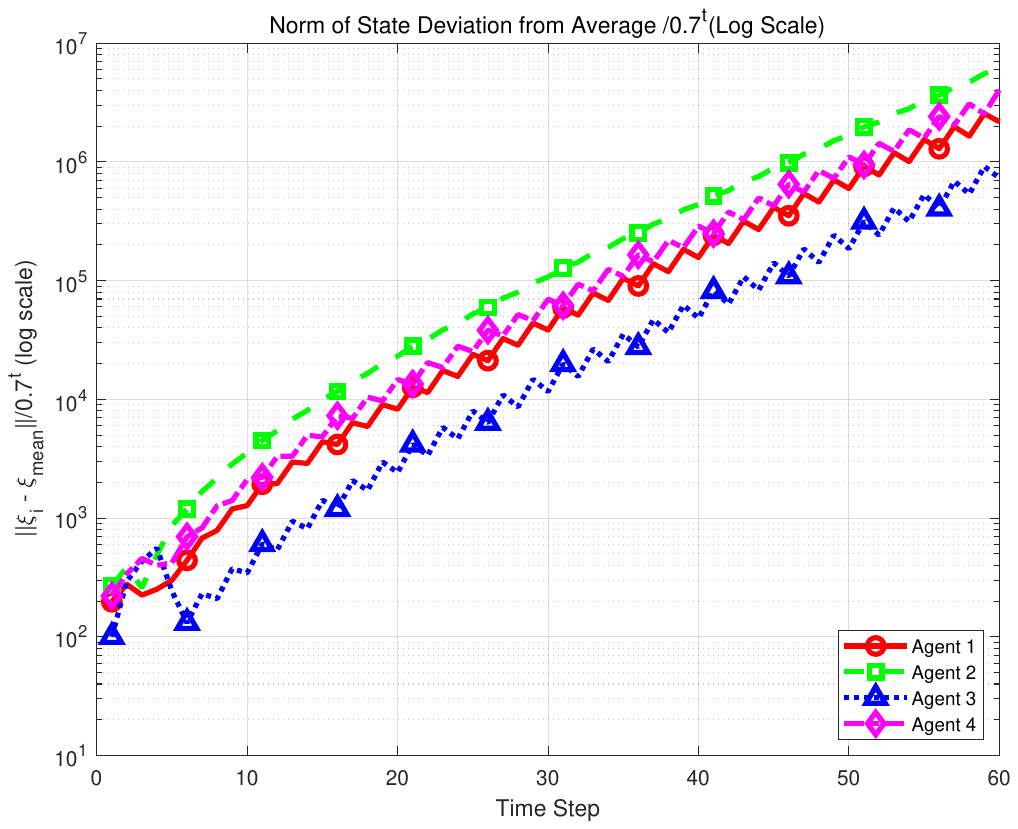}
    \label{fig7}}

\caption{State norm deviation and comparison with exponential decay.}
\label{fig:norm_deviations}
\end{figure}

Theorem \ref{thm3} guarantees that for any $r$ satisfying $r^* < r < 1$, system \eqref{eq:whole_system} synchronizes faster than $r^t$, where $r^* = 0.7821$.
Numerical results are shown in the figures:

\begin{itemize}
\item Figures \ref{fig:state1}, \ref{fig:state2}, \ref{fig:state3} show the evolution of the three state components of $\xi_1,\ldots,\xi_4$ relative to the average.
The simulation results are consistent with Theorem \ref{thm2}: the multi-agent system achieves synchronization.
\item Figure \ref{fig4} shows the deviation of $\xi_i$ from the average in semilogarithmic coordinates. 
The figure indicates that the deviation decays exponentially, implying that the multi-agent system achieves exponential synchronization.
Figures \ref{fig5}–\ref{fig7} show the ratio of this deviation to $0.9^t$, $0.8^t$, and $0.7^t$, respectively, also in semilogarithmic coordinates. 
Specifically:
\begin{itemize}
    \item Figure \ref{fig5} shows that the deviation decays faster than $0.9^t$;
    \item Figure \ref{fig6} shows that the deviation decays roughly as $0.8^t$;
    \item Figure \ref{fig7} shows that the deviation decays slower than $0.7^t$.
\end{itemize}
These simulation results are consistent with Theorem \ref{thm3}: 
the multi-agent system indeed achieves exponential synchronization, 
and the actual convergence rate is close to the theoretical estimate $r^*$.
\end{itemize}

\end{example}
\section{Conclusion and Future work}
This paper has studied asymptotic decoupling for a class of discrete-time LTV systems 
and applied the result to the autonomous synchronization of heterogeneous MASs.
As the application, Theorem \ref{thm2} provides sufficient conditions for synchronization, 
and Theorem \ref{thm3} gives an explicit estimate of the convergence rate, which numerical simulations show to be close to the actual convergence behavior.

Although serving as a significant improvement upon the results of \cite{5},
the proposed framework still relies on global knowledge of the graph topology and the average dynamics $S_\infty$.
Future work will focus on developing fully distributed protocols.

\bibliographystyle{IEEEtran}
\bibliography{references}

\end{document}